\newtheorem{theorem}{Theorem}
\begin{document}

\title{Directional WPT Charging for Routing-Asymmetric WRSNs with a Mobile Charger }
\author{
    Zhenguo~Gao*~\IEEEmembership{Senior Member,~IEEE}, Qi Zhang, Qingyu Gao, Yunlong Zhao, Hsiao-Chun Wu*~\IEEEmembership{Fellow,~IEEE}
    
    \IEEEcompsocitemizethanks{
        \IEEEcompsocthanksitem Z.~G.~Gao, Q.~Zhang, and Q.~Y.~Gao are with both the Department of Computer Science and Technology in Huaqiao University, and Key Laboratory of Computer Vision and Machine Learning(Huaqiao University), Fujian Province University, Xiamen, FJ, 361021, CHINA. (e-mail: {\tt gzg@hqu.edu.cn}); Y.~L.~Zhao is with the School of Computer Science and Technology, Nanjing University of Aeronautics and Astronautics, Nanjing, JS, 211100, CHINA; H.-C.~Wu is with the School of Electrical Engineering and Computer Science, Louisiana State University, Baton Rouge, LA 70803, USA and also with the Innovation Center for AI Applications, Yuan Ze University, Chungli 32003, Taiwan (e-mail: {\tt hwu1@lsu.edu}.
    }
    
    \thanks{This work was jointly supported by Natural Science Foundation of China under Grants 62372190, 62072236, 61972166 and Industry University Cooperation Project of Fujian Province under Grant 2021H6030.}
}

\markboth{xxx,~Vol.~18, No.~9, September~2024}%
{xxx}

\maketitle

\begin{abstract}
Mobile Charge Scheduling for wirelessly charging nodes in Wireless Rechargeable Sensor Networks (WRSNs) is a promising but still evolving research area. Existing research mostly assumes a symmetric environment, where the routing costs in opposite directions between two locations are considered identical. However, various factors such as terrain restrictions and wind or water flows may invalidate the routing-symmetric assumption in practical environments, thereby significantly limiting the performance of these solutions in routing-asymmetric WRSNs (RA-WRSNs). To address the routing-asymmetric challenges in mobile charge scheduling for WRSNs, this paper systematically investigates the underlying Asymmetric Directional Mobile Charger (DMC) Charge Scheduling (ADMCCS) problem, aiming to minimize energy loss while satisfying the charging demands of the network nodes. The DMC model is assumed because its results can be easily applied to the specialized case of an Omnidirectional Mobile Charger (OMC). To solve the ADMCCS problem, we propose a four-step framework. First, a minimum-size efficient charging position set is selected using our designed K-means-based Charging Position Generation (KCPG) algorithm, addressing the challenge of the unlimited charging position selection space. Next, minimum-size functional-equivalent direction sets at these positions are determined using an optimal algorithm, tackling the challenge of infinite charging directions. Subsequently, the optimal energy transmission time lengths for all directions at the positions are obtained by formulating and solving a Nonlinear Program (NLP) problem. Finally, the Lin-Kernighan Heuristic (LKH) algorithm for the Asymmetric Traveling Salesman Problem is adapted to obtain a highly probable optimal loop tour, addressing the routing-asymmetric challenge. The combination of these steps results in our DMC Scheduling algorithm for RA-WRSNs (RA-DMCS). The properties of the ADMCCS problem and the proposed algorithms are analyzed, and experiments demonstrate that RA-DMCS considerably outperforms other typical algorithms.
\end{abstract}

\begin{IEEEkeywords}
Charging scheduling, wireless power transfer, directional mobile chargers, wireless rechargeable sensor networks, asymmetric path planning
\end{IEEEkeywords}

\section{Introduction}
\label{sec:intro}
\IEEEPARstart{W}ireless Sensor Networks (WSNs) have been widely applied in various industrial and everyday scenarios. However, a major limitation hindering the rapid development of WSNs is the energy constraints of battery-powered nodes~\cite{wang_secure_2022,ma_path_2019}. This issue is particularly pronounced with the advancement of edge intelligence, which imposes high computational demands on the nodes and leads to increased energy consumption~\cite{meng_learning-driven_2021,fu_optimal_2016}. To address this issue, researchers are working not only on improving battery storage capacity but also on advancing Wireless Power Transfer (WPT) technology for on-demand wireless charging of the nodes~\cite{fan_survey_2018,liguang_xie_wireless_2013}.

With the ongoing development of WPT technology, researchers have extended its application to WSNs, leading to the concept of Wireless Rechargeable Sensor Networks (WRSNs)~\cite{Gao2023iotj}. In WRSNs, the nodes are equipped with energy reception modules, while fixed or Mobile Chargers (MCs) equipped with WPT energy transmission modules are responsible for charging the nodes on demand. A WRSN also contains a Base Station (BS) where the MCs are replenished and remain stationed, awaiting node charging requests.

To efficiently and promptly meet the energy demands of the nodes in WRSNs, optimizing the MCs' charging schedule involves designing their charging trajectory, scheduling their charging directions, and determining charging times along each direction. Executing a charging schedule with an MC incurs two types of energy consumption: movement energy consumption and charging energy consumption. The former is used for moving the MC, while the latter results from the energy transmission from the MC to the nodes. An MC's charging schedule should aim to minimize overall energy consumption while ensuring that all nodes receive adequate energy~\cite{dorigo_ant_2006}. Depending on the energy transmission module they are equipped with, MCs can be categorized as Omnidirectional MCs (OMCs) or Directional MCs (DMCs). OMCs radiate energy signals uniformly in all directions, while DMCs concentrate signal energy in a specific direction within a sector-like region~\cite{george_review_2020}.

Initially constrained by limited charging distance, early work focused on a one-to-one (O2O) charging mode using OMCs, where an MC must approach very close to a node to charge it~\cite{Lin2018jss}. This simplifies the charging schedule problem to a trajectory design problem for the MC. With advancements in WPT enabling long-distance wireless charging over several meters, many studies have begun to explore one-to-many (O2M) and many-to-many (M2M) charging modes. These approaches allow multiple MCs to charge several nodes simultaneously. In these studies, the charging schedule involves both the trajectory design of the MCs and their energy transmission schedules. Some research further extends the MC charging scenario by assuming that nodes can also transmit energy, enabling explicit multi-hop WPT. These studies propose efficient scheduling algorithms that produce improved schedules~\cite{Gao2020tgcn,Gao2023iotj,Gao2023adhoc}. DMC-based charging of WRSNs offers the advantage of higher energy transfer efficiency~\cite{liu_objective-variable_2020}, but it introduces the challenge of determining an optimal charging direction set. As a result, DMC-based charging of WRSNs is increasingly attracting research attention~\cite{gao_scheduling_2023}.

Existing studies often assume routing-symmetric environments~\cite{wu_collaborated_2019}, where the energy consumption for movement between two positions is considered identical in both directions. However, this routing-symmetric assumption often does not hold in many real-world scenarios due to factors such as terrain and topography restrictions, altitude differences, wind forces (if deployed in the air), and water flow (if deployed on or in water). Routing asymmetry makes designs intended for routing-symmetric environments ineffective in practical routing-asymmetric scenarios. Although the asymmetry of up-down paths in underwater sensor networks was addressed in~\cite{Lin2018jss}, this work only considered routing asymmetry within a simple greedy trajectory design framework and focused on OMCs in O2O charging mode. Consequently, it did not explore the more energy-efficient M2M mode, resulting in less effective charging schedules. For simplicity, we will omit the term "routing" when referring to "symmetric" and "asymmetric" in the following text.

To systematically address routing-asymmetric issues, we conducted an in-depth investigation into Directional WPT Charging of WRSNs using DMCs in such asymmetric environments. To avoid being limited to specific asymmetric scenarios and to generalize across various environments, we refer to these WRSNs as Routing-Asymmetric WRSNs (RA-WRSNs).

In this paper, we primarily investigate the DMC charge scheduling problem in RA-WRSNs, which we term the Asymmetric DMC Charge Scheduling (ADMCCS) problem. The objective is to develop a DMC charging schedule that ensures all nodes' energy demands are met. Several challenges impede this task: the unlimited selection space for charging positions, the infinite number of possible charging directions, and the routing asymmetry of the WRSNs.

We show that the ADMCCS problem is NP-hard, making it impractical to design an optimal algorithm; hence, we focus on efficient approximation algorithms. To simplify the problem, we additionally assume that the DMC does not transmit energy signals while moving. Inspired by~\cite{gao_energy_2019}, we address the ADMCCS problem using a multi-step approach, as outlined in Fig.~\ref{fig1}. First, we identify several charging positions within the network region. Next, we establish the charging directions and determine the time duration for transmitting energy along each direction at each position. Finally, we design a round tour to visit all charging positions in the RA-WRSNs.

The corresponding tasks in these steps are formulated and solved with specifically designed algorithms. Specifically, we propose a K-means Charging Position Generation (KCPG) algorithm to determine charging positions, addressing the challenge of the unlimited charging position selection space. We adopt the cMFRDS algorithm from~\cite{gao_scheduling_2023} to identify a minimum set of functional representative directions as the final charging directions, thus addressing the challenge of infinite charging directions. We use CPlex software to solve a linear programming problem and determine the optimal charging times for all directions at the selected charging positions. We use the Lin-Kernighan Heuristic (LKH) algorithm~\cite{lkh_2019}, a state-of-the-art algorithm for solving the TSP problem, to find an optimal loop tour with minimal length in the RA-WRSNs to visit all charging positions, thereby tackling the routing asymmetry challenge. These algorithms are then integrated into our DMC Scheduling algorithm in RA-WRSNs (RA-DMCS) to solve the ADMCCS problem. Extensive simulations and test-bed experiments demonstrate the superiority of our RA-DMCS algorithm.

\begin{figure}[htb]        
  \centering
  \includegraphics[width=0.49\textwidth]{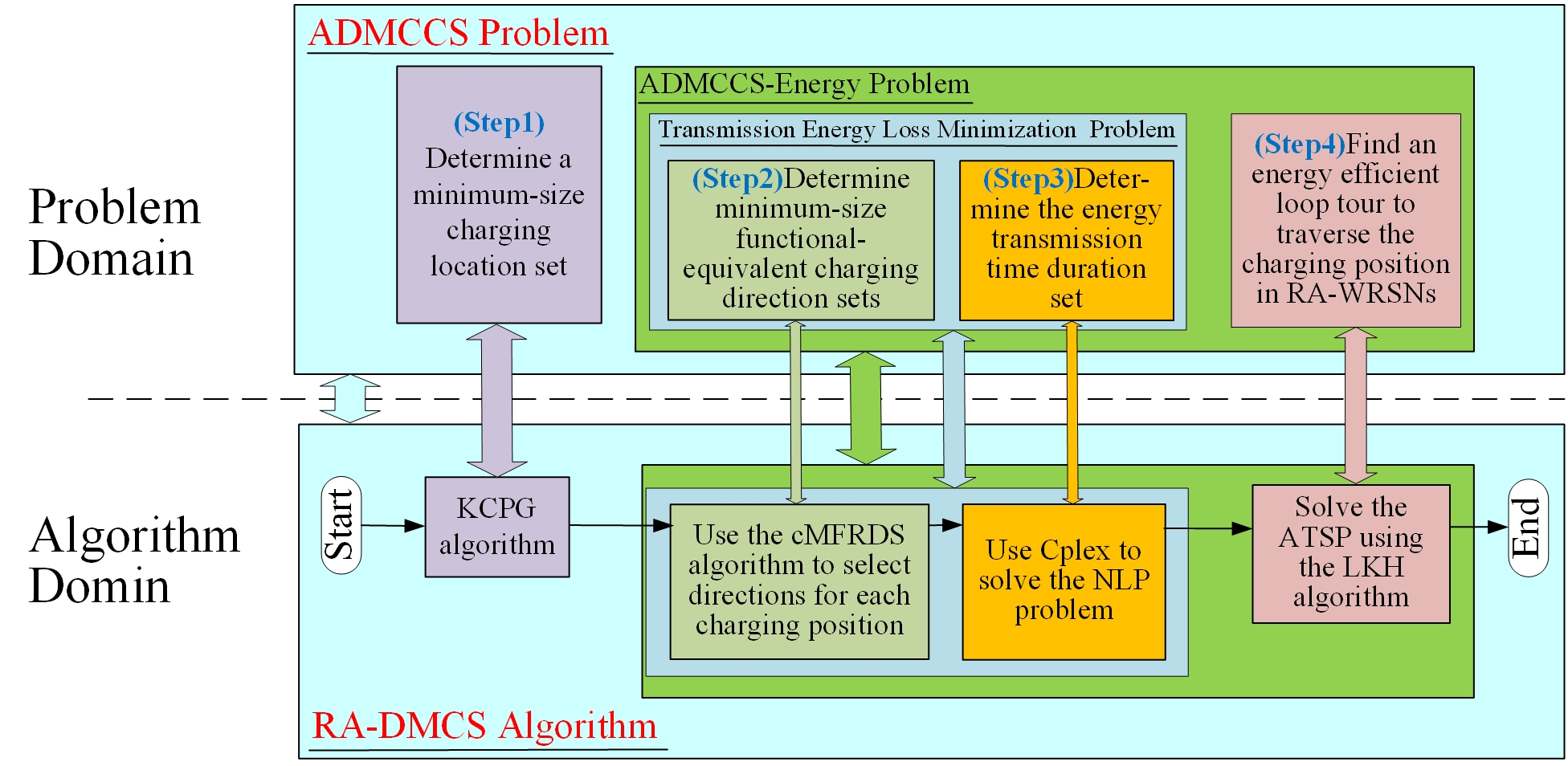}\\
  \caption{Outline of the RA-DMCS algorithm for the ADMCCS problem}
  \label{fig1}
\end{figure}

The main contributions of this paper are summarized as follows:

\begin{enumerate}
\item To the best of the authors' knowledge, this is the first work to abstractly construct the concept of routing-asymmetric WRSNs and to systematically investigate the DMC-based directional wireless charging scheduling problem in RA-WRSNs.

\item We demonstrate that the ADMCCS problem for DMC-based wireless charging scheduling in RA-WRSNs is NP-hard and apply state-of-the-art methods and results from the fields of the Traveling Salesman Problem (TSP) and Asymmetric TSP (ATSP) to address this problem.

\item We propose the RA-DMCS algorithm to solve the ADMCCS problem. Extensive simulation experiments validate the effectiveness of our RA-DMCS algorithm.
\end{enumerate}

The remaining sections of this paper are organized as follows: Section~\ref{sec:relate} provides a brief overview of previous research on WRSNs. Section~\ref{sec:model} introduces the system and energy transmission models. Section~\ref{sec:ADMCCS} details the ADMCCS problem, its mathematical formulation, and proposes corresponding solutions through in-depth analysis. Section~\ref{sec:RA-DMCS} presents the RA-DMCS algorithm along with its analysis. Section~\ref{sec:simula} evaluates the performance of the algorithms through simulations. Finally, Section~\ref{sec:conclusion} summarizes the key findings and concludes the paper.

\section{Related Work}
\label{sec:relate}
In this section, we summarize the current research on charging scheduling and classify it into two main categories: stationary charging and mobile charging. Additionally, we review related work on the study of asymmetric paths.

\subsection{Charging with Stationary Chargers}
As the deployment of stationary chargers with omnidirectional WPT can easily be adapted from various node deployment solutions for covering a network in closely related realms, most related work has focused on the topic of charging WRSNs with stationary directional chargers. The main concern is to optimally determine the positions and charging directions for the chargers while considering various additional factors.

For example, for a given WRSN and a fixed number of stationary chargers, Dai et al.~\cite{dai_optimizing_2017} proposed an approximate algorithm for optimizing the overall expected charging utility by jointly optimizing the positions and charging directions for the chargers. The approximation ratio of the algorithm was analyzed. For WRSNs where nodes may drift within a certain range, Wang et al.~\cite{wang_robust_2019} aimed to maximize the overall expected charging utility by determining the charging directions of multiple directional stationary chargers with predetermined fixed positions. The authors subsequently extended their work to complex scenarios with obstacles and multi-type heterogeneous stationary chargers, where a piecewise constant function was used to approximate the nonlinear energy transmission power model. This body of work targets 2D WRSNs, where the solution space is a restricted region in a 2D plane.

Jiang et al.~\cite{jiang_efficient_2016} proposed a 3D WRSN charging scheme in which nodes and chargers are deployed on different planes. By reasonably controlling the number and positions of chargers, the scheme aims to meet the charging requirements of the nodes while minimizing costs.

Although these solutions can produce efficient deployment strategies for the chargers, the lack of mobility limits their adaptability to changes in the energy demands of the nodes, thus restricting the flexibility of such WRSNs.

\subsection{Charging with Mobile Chargers}
For MCs with omnidirectional WPT, charging scheduling with mobile chargers involves trajectory design and charging time determination.

MCs with directional WPT introduce the selection of charging direction as a new design dimension, and the charging time at a position should be specified according to the charging directions.

For instance, Wu et al.~\cite{wu_charging_2019} proposed flexible scheduling strategies to optimize charging routes, enhancing the charging utility of the MCs. Liu et al.~\cite{liu_effective_2020} further refined the optimization objectives by focusing on energy utilization efficiency and the number of dead nodes. They proposed a request-based charging path system tailored to their model. Lin et al.~\cite{lin_maximizing_2020} improved the path planning component by exploring the selection of charging positions while considering energy constraints and obstacles, thereby offering solutions for complex network environments. In~\cite{dai_placing_2020}, Dai improved upon previous work by incorporating directional charging into their scheme and studying the placement strategy of DMCs under mobility constraints. This strategy ensures effective energy delivery even when movement is restricted. Additionally, Abhinav et al.~\cite{tomar_efficient_2017} introduced a heap-based energy replenishment scheme, which efficiently prioritizes nodes based on their energy requirements. Smriti et al.~\cite{priyadarshani_efficient_2021} considered on-demand charging schemes that account for heterogeneous energy consumption and partial charging, optimizing the use of available energy resources.

These studies collectively aim to achieve efficient energy utilization and extend the lifespan of WRSNs by addressing various challenges associated with mobile charging, such as asymmetric routing, path optimization, energy constraints, and dynamic network conditions. We have summarized the research details of the aforementioned papers, as shown in Table~\ref{tab:summary of detail}.

\begin{table*}[!htpb]
\centering
\caption{Summary of Main Related Work}
\scriptsize
\label{tab:summary of detail}
\begin{tabular}{|>{\centering\arraybackslash}m{1.0cm}|>{\centering\arraybackslash}m{2.6cm}|>{\centering\arraybackslash}m{2.2cm}|>{\centering\arraybackslash}m{2.2cm}|>{\centering\arraybackslash}m{2.2cm}|>{\centering\arraybackslash}m{2.2cm}|>{\centering\arraybackslash}m{2.2cm}|}
\hline
\textbf{Work} & \textbf{Optimization objective} & \textbf{One-to-many charging} & \textbf{Charging position optimization} & \textbf{Support directional charging} & \textbf{Support mobile charging} & \textbf{Address asymmetric routing} \\ \hline
\cite{dai_optimizing_2017} & Charging utility & $\checkmark$ & $\checkmark$ & $\checkmark$ &  &  \\ \hline
\cite{wang_robust_2019} & Charging utility & $\checkmark$ & $\checkmark$ & $\checkmark$ & &    \\ \hline
\cite{jiang_efficient_2016} & Costs & $\checkmark$ & $\checkmark$ & $\checkmark$ &  &    \\ \hline
\cite{wu_charging_2019} & Charging utility & $\checkmark$ & $\checkmark$ &  & $\checkmark$ &   \\ \hline
\cite{liu_effective_2020} & Energy efficiency and Number of Dead node & $\checkmark$ & $\checkmark$ &  & $\checkmark$ &   \\ \hline
\cite{lin_maximizing_2020} & Charging utility & $\checkmark$ & $\checkmark$ &  & $\checkmark$ &   \\ \hline
\cite{dai_placing_2020} & Charging utility & $\checkmark$ & $\checkmark$ & $\checkmark$ & $\checkmark$ &    \\ \hline
\cite{tomar_efficient_2017} & Charging utility & $\checkmark$ & $\checkmark$ & $\checkmark$ & $\checkmark$ &   \\ \hline
\cite{priyadarshani_efficient_2021} & Charging Delay & $\checkmark$ & $\checkmark$ &  & $\checkmark$ &   \\ \hline
Our work & Energy loss and Time & $\checkmark$ & $\checkmark$ & $\checkmark$ & $\checkmark$ & $\checkmark$   \\ \hline
\end{tabular}
\end{table*}

\subsection{Mobile Charging in Asymmetric Environments}
The above work assumes routing-symmetric environments, but this assumption may not hold in many real-world scenarios, which hinders the applicability of the proposed algorithms. However, to date, very few studies have considered the routing-asymmetry feature in charging scheduling for WRSNs. For charging WRSNs in underwater environments, Lin et al.~\cite{Lin2018jss} considered the asymmetry of underwater movement, but the routing-asymmetry was only addressed within a simple greedy trajectory design framework, failing to be systematically considered and tackled. Furthermore, their work only deals with OMCs in the O2O charging mode, without exploring the more energy-efficient M2M mode, resulting in a less effective charging schedule. As far as we know, this is the only work that considers routing-asymmetry in addressing the charging scheduling problem in WRSNs.

In fact, routing-asymmetry has long been an important issue in various versions of trajectory design problems, such as the TSP problem. To facilitate solving ATSP using the powerful algorithms designed for TSP, a method was proposed in Ref.~\cite{Jonker1983} to transform an ATSP instance into a TSP instance. Better exploiting the state-of-the-art algorithms in the realms of TSP and ATSP can facilitate systematically addressing the charge scheduling problem in WRSNs with routing-asymmetry, making the solutions more applicable to real-world scenarios.

\section{Preliminary Models}
\label{sec:model}

In this section, we present the preliminary models, including the system model, asymmetric routing model, DMC's energy consumption model, WPT energy transfer model, and directional energy transfer coefficient model.

\subsection{System Model}

We consider an RA-WRSN consisting of a Base Station (BS) located at position $l_0$, a DMC, and $N$ rechargeable nodes in the set $\mathcal{U} = \{u_1,u_2,\ldots,u_N\}$. For simplicity, each node $u_i{\in}\mathcal{U}$ is also referred to by its position. The DMC is responsible for charging the nodes to ensure their normal operation. It typically recharges its battery fully at the BS and remains there until required to charge some nodes. Each node’s energy demand is referred to as a charging task. To address a set of charging tasks, the DMC departs from the BS, follows a tour to charge the nearby nodes, and then returns to the BS. The charging tour must be meticulously designed and scheduled to fulfill the charging tasks efficiently. An illustrative example of the RA-WRSN is shown in Fig.~\ref{fig2}.

\begin{figure}[htb]       
\centering
\includegraphics[width=0.49\textwidth]{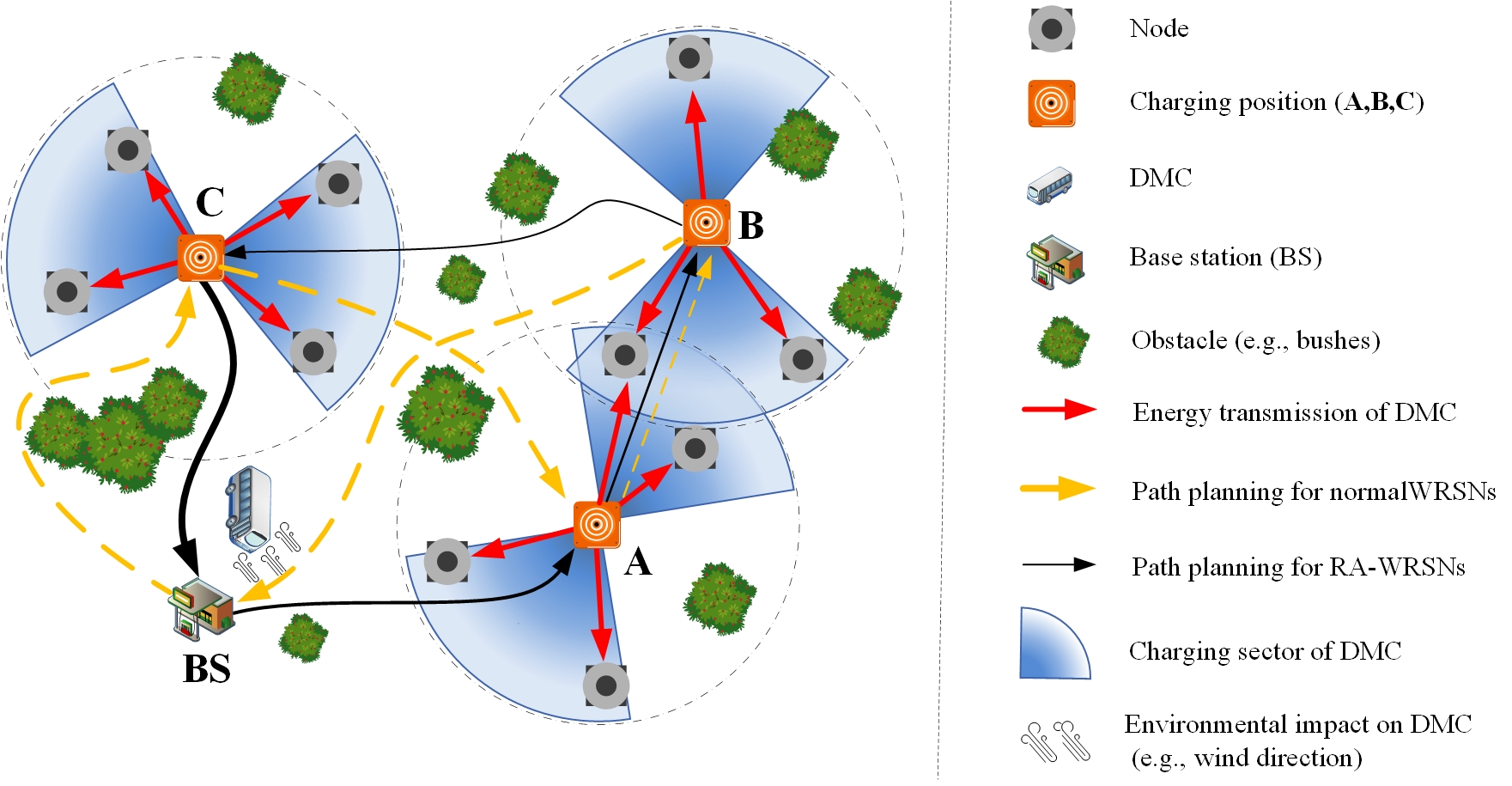}\\
\caption{An example RA-WRSN}
\label{fig2}
\end{figure}

Each node is equipped with a rechargeable battery and a wireless energy transceiver for transmitting and receiving energy. Node $u_i{\in}\mathcal{U}$ has some parameters as follows: initial energy $e_\text{B}^i$, energy demand $e_\text{D}^i$, and storage capacity $e_\text{C}^i$. Here, $e_\text{B}^i$ denotes the initial energy stored in the battery of $u_i$, and $e_\text{D}^i$ represents the amount of energy required by node $u_i$. To simplify notation, we compactly express the parameters of the nodes as column vectors. For example, we use $\textbf{e}_\text{B}$ to represent $\textbf{e}_\text{B}{:=}[e_\text{B}^1,e_\text{B}^2,\cdots,e_\text{B}^N]$. 

The DMC has the following parameters: energy transmission power $p_0$, initial energy $e_\text{B0}$, moving speed $\overline{v}$, and energy consumption base rate $w_0$ for moving one unit distance. Here we assume that the initial energy $e_\text{B0}$ is sufficient to meet the energy demands of all nodes. Otherwise, we can split the whole charging tour into several sub charging tours such that each sub-tour can be fully served by an DMC, as discussed in~\cite{Gao2023adhoc}.   

To simply the charge scheduling task, we additionally assume that the DMC does not transmit energy while moving. Therefore, a charging task set can be fulfilled in steps by firstly selecting charging positions in the network area, then determining a charging tour and charging schedule, to traverse these selected positions and staying their for a time to transmitting energy along some selected directions. 

\subsection{Asymmetric Routing Model}
The movement energy consumption incurred in moving the MC make up a main part of the energy consumption for fulfilling a charging schedule. Routing-asymmetry considerably affects the movement energy consumption of a charging tour. As introduced in Sec.~\ref{sec:intro}, routing-asymmetry results from various factors, such as terrain and topography restrictions, altitude difference, wind force if deployed in the air, and water flow if deployed on or in water. To abstract the heterogeneous underlying factors, we build an abstracting asymmetric routing model, which contains two parts: the asymmetric distance model and the asymmetric movement energy consumption rate model, where at the core are two factor coefficients for reflecting the effect of factors causing asymmetry. 

For the asymmetric distance model, we define routing-asymmetry (RA)  distance coefficient $k_\text{RA,Dis}{:=}k_\text{RA,Dis}(l_i,l_j|$ $ m_\text{RA},h_\text{RA},f_\text{RA})$. The RA distance coefficient mimics the asymmetric effect on the length of the path segment from $l_i$ to $l_j$, determined by the factors of $m_\text{RA}$, $h_\text{RA}$, and $f_\text{RA}$. Here $h_\text{RA}$ represents the height distance between $l_i$ and $l_j$, $f_\text{RA}$ represents the accumulated external force affecting the MC's movement from $l_i$ to $l_j$, and $m_\text{RA}$ represents the overall effect of other miscellaneous factors result from terrain and topography heterogeneity. Detailed formulation of $k_\text{RA,Dis}$ depends on the particular asymmetric environment, whereas the overall effect for all asymmetric environment finally lead to a real scale value. We abbreviate $k_\text{RA,Dis}(l_i,l_j|m_\text{RA},h_\text{RA},f_\text{RA})$ as $k_\text{RA,Dis}(l_i,l_j)$ for simplicity. Similarly, for the asymmetric movement energy consumption rate model, we define RA movement energy consumption rate coefficient $k_\text{RA,Egy}{:=}k_\text{RA,Egy}(l_i,l_j|m_\text{RA},h_\text{RA},f_\text{RA})$ to account for the asymmetry on actual movement energy consumption rate per unit distance from $l_i$ to $l_j$.

With the above defined coefficients, for any two positions $l_i$, $l_j$ in the area of the RA-WRSN, let $d_\text{ideal}(l_i,l_j)$ denote the ideal euclidean distance between them, and let $d(l_i,l_i)$ denote the routing-asymmetric real distance. To emphasize the context in routing-asymmetry, We call $d(l_i,l_i)$ as the \textbf{RA distance} of the path segment from $l_i$ to $l_j$. Similarly, we let $w(l_i,l_j)$ denote the routing-asymmetric real movement energy consumption rate per unit distance from $l_i$ to $l_j$, naming it as \textbf{RA energy consumption rate}. Then the RA distance and RA movement energy consumption of the path segment from $l_i$ to $l_j$ are modeled as 

\begin{align}
d(l_i,l_j)&=k_\text{RA,Dis}(l_i,l_j){\cdot}d_\text{Ideal}(l_i,l_j),\label{eq_ra_dis}\\
w(l_i,l_j)&=k_\text{RA,Egy}(l_i,l_j){\cdot}w_0.\label{eq_ra_egy}
\end{align}

The above two equations make up the asymmetric routing model. The routing-asymmetry are ultimately manifested as RA distances and RA energy consumption rates of directional path segments between charging position pairs. For notation simplicity, We collect the RA values into two matrices of $\textbf{D}{=}[d(l_i,l_j)]_{\{l_i,l_j{\in}\mathcal{L}\}}$ and $\textbf{W}{=}[w(l_i,l_j)]_{\{l_i,l_j{\in}\mathcal{L}\}}$. Here $\mathcal{L}$ is the set of charging positions.

\subsection{Energy Consumption Model of the DMC}
In fulfilling a charging schedule, the energy consumption of the DMC contains two parts: movement energy consumption for driving the DMC move, charging energy consumption resulting from charging the nodes. 

For a path segment from point $l_i$ to $l_j$ with RA distance $d(l_i,l_j)$ and movement energy consumption rate $w(l_i,l_j)$, the movement energy consumption $e^{\text{MC}}_\text{Move}(l_i,l_j)$ and the moving time $t^\text{MC}_\text{Move}(l_i,l_j)$ are respectively modeled as Eq.~\eqref{eq_dmconecost_egy} and Eq.~\eqref{eq_dmconecost_time}, where $\overline{v}$ is the constant moving speed of the DMC. 

\begin{align}
e^{\text{MC}}_\text{Move}(l_i,l_j)&=d(l_i,l_j)w(l_i,l_j),&l_i,l_j{\in}\mathcal{L}\label{eq_dmconecost_egy}\\
t^{\text{MC}}_\text{Move}(l_i,l_j)&=d(l_i,l_j)/\overline{v}, &l_i,l_j{\in}\mathcal{L}.\label{eq_dmconecost_time}
\end{align}

Let $\text{r}(\mathcal{L}){:=}[l_0,l_{\pi_1},l_{\pi_2},\ldots,l_{\pi_|\mathcal{L}|},l_{\pi_{(|\mathcal{L}|{+}1)}}{=}l_0]$ represent a charging tour contains $|\mathcal{L}|$ intermediate positions, then the movement energy consumption and moving time of the DMC following the tour can be obtained as

\begin{align}
e^{\text{MC}}_\text{Move}(\text{r}(\mathcal{L}))&=\sum_{i{=}0}^{|\mathcal{L}|}e^{\text{MC}}_\text{Move}(l_{\pi_i},l_{\pi_{(i{+}1)}}),\label{eq_dmctour_egy}\\
t^{\text{MC}}_\text{Move}(\text{r}(\mathcal{L}))&=\sum_{i{=}0}^{|\mathcal{L}|}t^{\text{MC}}_\text{Move}(l_{\pi_i},l_{\pi_{(i{+}1)}}).\label{eq_dmctour_time}
\end{align}

Let $\mathcal{S}_\text{Dir}(i){:=}\{\psi_1,\psi_2,\ldots,\psi_{k_i}\}$ denote the charging direction set at position $l_i$. We use a tuple $(l_i,\psi_j)$ to completely define a pair of a direction and its associated position. Without loss of generality, let assume that the DMC does not transmit energy at the BS, i.e., the number of charging directions at $l_0$ is $k_0{=}0$. Let $\mathcal{S}_\text{PosDir}(l_i){:=}\{(l_i,\psi_j)|i{\in}\{1,2,\ldots,k_i\}$, then $\mathcal{S}_\text{PosDir}{:=}$ ${\cup}_{i{=}1}^{|\mathcal{L}|}\mathcal{S}_\text{PosDir}(l_i)$ contains all position-direction (abbreviated as Pos-Dir) pairs. Let us sort the pairs in $\mathcal{S}_\text{PosDir}$ into a list firstly on position and then on direction value, and with a slight abuse of symbols, we re-use $\mathcal{S}_\text{PosDir}$ to denote the sorted list. 
We further assume that the charge time list (or column vector) corresponding to the Pos-Dir pairs in $\mathcal{S}_\text{PosDir}$ as $\textbf{t}^\text{MC}_\text{Tran}{=}[t^\text{Tran}_1,t^\text{Tran}_2,\ldots,t^\text{tran}_{K}]^\text{T}$ with $K{:=}\sum_{i{=}1}^{M}k_i$ denotes the length of list $\textbf{t}^\text{MC}_\text{Tran}$. With the assumption that the DMC always transmit energy with power $p_0$, the charging energy consumption of the DMC for transmitting energy can be obtained as

\begin{equation}
\label{eq_dmc_egy_loss_tran}
e^{\text{MC}}_\text{Tran}{:=}p_0{\cdot}{\sum}_{i{=}1}^{K}t^\text{Tran}_i
{=}p_0\mathds{1}^{1{\times}K}\textbf{t}^\text{MC}_\text{Tran}.
\end{equation}

The total energy consumption and the final energy of the DMC after fulfilling a charging schedule can be obtained as Eq.~\eqref{eq_dmc_egy_total} and Eq.~\eqref{eq_dmc_egy_final}, respectively.

\begin{equation}
\label{eq_dmc_egy_total}
e^{\text{MC}}_\text{Total}{=}e^{\text{MC}}_\text{Tran}{+}e^{\text{MC}}_\text{Move}(\textbf{r}(\mathcal{L})).
\end{equation}

\begin{equation}
\label{eq_dmc_egy_final}
e_\text{F0}{=}e_\text{B0}{-}e^{\text{MC}}_\text{Total}{=}e_\text{B0}{-}e^{\text{MC}}_\text{Tran}{-}e^{\text{MC}}_\text{Move}(\textbf{r}(\mathcal{L})).
\end{equation}

\subsection{WPT Energy Transfer Model}
\label{sec_wpt_egy_transfer_model}
Let $c(i,j)$ denote the energy transfer coefficient from Pos-Dir pair $i$ in $\mathcal{S}_\text{PosDir}$ to node $u_j$. If the DMC transmits energy at power $p_0$, the energy power received by node $u_j$ is expressed as $p(k){:=}c(i,j)p_0$. Let construct energy transfer coefficient matrix $\mathbf{C}{:=}[c(i,j)]_{\{i{\in}\{1,2,\cdots,K\},j{\in}\{1,2,\cdots,|\mathcal{L}|\}\}}$, then given the energy transmission time duration list $\textbf{t}^\text{MC}_\text{Tran}{=}[t_1,t_2,\ldots,t_{K}]^\text{T}$ corresponding to the Pos-Dir pairs in $\mathcal{S}_\text{PosDir}$, the list of energy received by all nodes $\textbf{e}_\text{R}{:=}[e^\text{R}_1,e^\text{R}_2,\cdots,e^\text{R}_M]^\text{T}$ can be compactly expressed as 

\begin{equation}
\label{eq_egy_received}
\textbf{e}_\text{R}{=}p_0\mathbf{C}\textbf{t}^\text{MC}_\text{Tran}.
\end{equation}

Considering the capacity limits of the nodes, final energy stored in the batteries of the nodes, denoted as $\textbf{e}_\text{F}$, can be expressed as Eq.~\eqref{eq_node_egy_final}, where $\min$ means element-wise minimization.

\begin{equation}
\label{eq_node_egy_final}
\textbf{e}_\text{F}{=}\min\{\textbf{e}_\text{B}{+}\textbf{e}_\text{R},\textbf{e}_\text{C}\}.
\end{equation}

\subsection{Directional Energy Transfer Coefficient Model}

Energy transfer coefficient model is used to determine the energy transfer coefficient used in Sec.~\ref{sec_wpt_egy_transfer_model}. Here we adopt a widely used directional energy transfer coefficient model as expressed in Eq.~\eqref{eq_dmc_egy_tranfer_coeff_model}, as outlined in~\cite{dai_radiation_2017}. Here the region covered by the energy transmission signal of a DMC is a sector rooted at the DMC with radius equals charge distance $D$ and sector angle $\varphi$, as depicted in Fig.~\ref{fig3}. The center-line of the coverage sector is referred to as the corresponding charging direction, denoted as $\psi$. With the additional parameter including charging direction angle $\psi$, node direction angle $\theta$, and the distance $d$ from the DMC to the node concerned, Eq.~\eqref{eq_dmc_egy_tranfer_coeff_model} determines the energy transfer coefficient from the DMC to the node. All direction angles in the model are measured relative to a certain reference direction with positive in counterclockwise. We assume that the DMC’s sector angle $\varphi$ remains constant, whereas $\psi$ is freely adjustable.

\begin{equation}
\label{eq_dmc_egy_tranfer_coeff_model}
c(\psi,\varphi,\theta,d ){=} 
\left\{
\begin{array}{ll}
\frac{\delta}{(\alpha{+}d)^{\beta}},&d{\le}D, \theta{\in}[\psi{-}\varphi/2,\psi{+}\varphi/2],\\
0,&\text{otherwise}. \\
\end{array}
\right.
\end{equation}

\begin{figure}[htb]       
\centering
\includegraphics[width=0.32\textwidth]{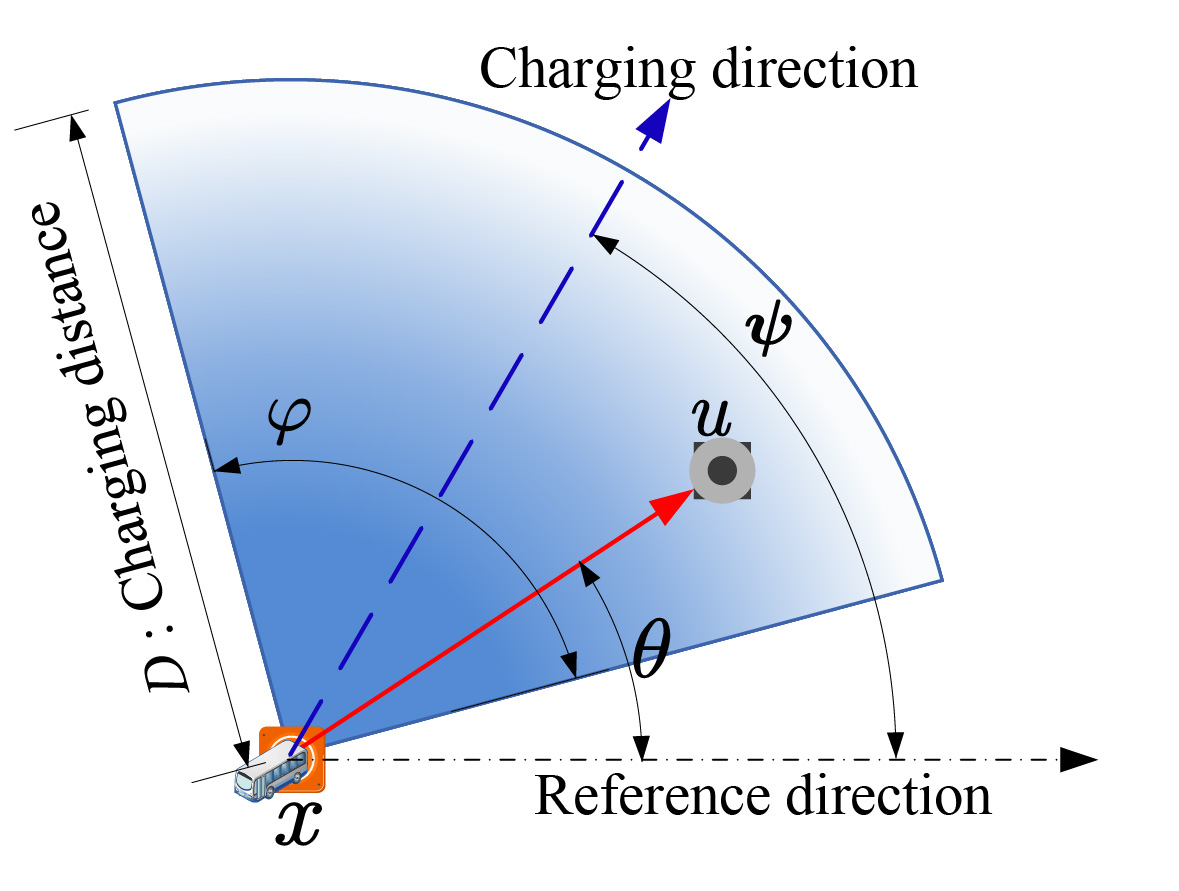}\\
\caption{Coverage sector of a DMC}
\label{fig3}
\end{figure}

The energy radiation leads to a charge sector is called a charging beam. As described in ~\cite{lee_energy-efficient_2021}, DMCs may be capable of generating multiple charging beams simultaneously. So they can correspondingly be classified as single-beam DMC and multi-beam DMC. We focus on single-beam DMC in this paper, whereas the multi-beam DMC case can be addressed following the idea in~\cite{gao_scheduling_2023}, where normal symmetrical WRSNs are assumed.

For ease of reference, we compiled the main symbols used in this paper in Table~\ref{Symbol Definitions}.

\begin{table}[!htpb]
\centering
\captionsetup{justification=centering}
\caption{\textsc{Symbol Definitions}}
\label{Symbol Definitions}
\begin{tabular}{|p{0.07\textwidth}|p{0.37\textwidth}|}
\hline
\textbf{Symbols}  & \textbf{Description}  \\ 
\hline
$\mathcal{U},u_i,N$   & $\mathcal{U}{:=}\{u_1,u_2,\ldots,u_N\}$ denotes the set of nodes with node number $M$, where $u_i$ is the $i$-th node.\\
\hline
$\mathcal{L},l_i,L$   & $\mathcal{L}{:=}\{l_1,l_2,\ldots,l_L\}$ denotes the set of charging positions with position number $L$, where $l_i$ is the $i$-th position $l_i$.\\
\hline
$\textbf{r}(\mathcal{L})$   & A charging tour on $\mathcal{L}$, i.e., a tour traverses all positions in $\mathcal{L}$.  \\ 
\hline
$e_\text{B},e_\text{D},e_\text{C}$   & $e_\text{B}{:=}\{e^\text{B}_1,\ldots,e^\text{B}_N\}$ is the list of initial energy of all nodes. $e_\text{D}$ and $e_\text{C}$ are the energy demanding list and storage capacity list.\\ 
\hline
$e_\text{F}, e_\text{R}$   &The lists of final energy and received energy of all nodes.\\ 
\hline
$e_\text{B0},e_\text{F0}$   &  DMC's initial energy and final energy.\\ 
\hline
$p_0,\overline{v},w$   & Energy transmission power, moving speed, energy consumption base rate of the DMC.  \\ 
\hline
$\psi,\varphi,D,\theta$   & Charging direction angle, charge sector angle, charge distance, and node direction angle.  \\ 
\hline
$e^\text{MC}_\text{Tran}(s)$, $e^\text{MC}_\text{Move}(s)$, $e^\text{MC}_\text{Loss}(s)$    & Charging energy consumption, movement energy consumption, and total energy loss of the DMC for fulfilling a charging schedule $s$. \\        
\hline
$e^\text{Total}_\text{Loss}(s)$, $e^\text{Nodes}_\text{Rcv}(s)$   & Total energy loss and total energy received by all nodes for fulfilling a charging schedule $s$. \\     
\hline
$e^\text{MC}_\text{Tran}(\textbf{t})$, $e^\text{Nodes}_\text{Rcv}(\textbf{t})$, $e^\text{WPT}_\text{Loss}(\textbf{t})$   & Total energy loss, total energy received by all nodes, and the charging energy loss for fulfilling a charging schedule with energy transmission time list $\textbf{t}$. \\     
\hline
$t^\text{MC}_\text{Tran}(s)$, $t^\text{MC}_\text{Move}(s)$  & Energy transmission time list and movement time list of the DMC in a charging schedule $s$. \\        
\hline
$\textbf{C},\textbf{D},\textbf{W}$   & Energy transfer coefficient matrix, RA distance matrix, RA movement energy consumption matrix.\\ 
\hline
$\textbf{s}_\text{MC}$   & A charging schedule consists of schedule items with structure $(state,l,\psi,t)$.    \\ 
\hline
$d(l_i,l_j)$, $w(l_i,l_j)$   & Routing-asymmetric distance and routing-asymmetric energy consumption of the path segment from position $l_i$ to position $l_j$. \\ 
\hline
$\mathcal{S}_\text{Dir}(l_i)$, $\mathcal{S}_\text{PosDir}$   & $\mathcal{S}_\text{Dir}(l_i)$ is the set of charging directions at position $l_i$, $\mathcal{S}_\text{PosDir}{:=}\cup_{l_i{\in}\mathcal{L}}\mathcal{S}_\text{Dir}(l_i)$.\\ 
\hline
\end{tabular}
\end{table}

\section{The ADMCCS Problem}
\label{sec:ADMCCS}

In this section, we first provide the structure of the charging schedule list, then define the ADMCCS problem, present its formulation, and prove that it is NP-hard.

\subsection{Structure of Charging Schedule List}
A charging schedule for fulfilling a charging task set involves the DMC's charging trajectory design, the charging tour time scheduling, and the determination of charging direction and charging time. We represent a complete charging schedule using an ordered list named the DMC's Operation Schedule (DOS). Here, we only consider schedules that, by default, satisfy all nodes' energy demands.

Let $\mathbf{s}_\text{MC}{=}[s^E_1,s^E_2,\ldots,s^E_{m_1}]$ represent a DOS list with $m_1$ items, where each item is a tuple $s^E_i{:=}(state,l,\psi,t)$. These items are classified into two categories: movement schedule items for arranging travel between charging positions, and energy transmission schedule items for arranging the energy transmission operation along directions at certain positions.
$state{\in}{0,1}$ is a binary variable indicating the category of the item. An item with $state{=}0$ means that the DMC should move toward position $l$ for a time duration $t$, and the field $\psi$ is not used. An item with $state{=}1$ means that the DMC should transmit energy at position $l$ along direction $\psi$ for a time duration $t$. For example, a schedule item $(1,3,45,5)$ means that the DMC, which is at position $l_3$, should transmit energy along the $45^{\circ}$ direction for 5 time units. A schedule item $(0,4,90,6)$ means that the DMC should move toward position $l_4$ for 6 time units.

Let $\mathcal{S}^\text{DOS}_\text{Move}$ and $\mathcal{S}^\text{DOS}_\text{Tran}$ denote the sets of movement schedule items and charging schedule items, respectively. Then, $\mathcal{S}^\text{DOS}_\text{Move}{=}\{s^E_i|s^E_i.state{=}0\}$ and $\mathcal{S}^\text{DOS}_\text{Tran}{=}\{s^E_i|s^E_i.state{=}1\}$.

For an item $s_i^E{\in}\mathbf{s}_\text{MC}$, we use $s_i^E.state$ to access the state value of the item. This rule applies to other fields of the schedule items. As a list, the items in $\mathbf{s}_\text{MC}$ are executed sequentially to fulfill the schedule. Let $\tau_{i}^\text{E}{:=}\sum_{j{=}1}^{i}s_j^\text{E}.t$, then item $s_i^E$ starts at time $\tau_{i{-}1}^\text{E}$ and ends at time $\tau_{i}^\text{E}$.
  
\subsection{The ADMCCS Problem}
For an instance of the ADMCCS problem, the total initial energy $e_\text{TB}{:=}e_\text{B0}{+}\mathds{1}^{1{\times}N}\textbf{e}_\text{B}$ is a constant. For a DOS charging schedule $\textbf{s}$, the total final energy of all nodes and the DMC is $e_\text{TF}(\textbf{s}){:=}e_\text{F0}{+}\mathds{1}^{1{\times}N}\textbf{e}_\text{F}$. Therefore, the total energy loss for fulfilling schedule $\textbf{s}$ is $e^\text{Total}_\text{Loss}(\textbf{s}){:=}e_\text{TB}{-}e_\text{TF}(\textbf{s})$, and the total energy consumption of the DMC after completing the schedule $\textbf{s}$ is $e^\text{MC}_\text{Loss}(\textbf{s}){:=}e_\text{B0}{-}e_\text{F0}$. We further define $e_\text{Rcv}^\text{Nodes}(\textbf{s}){:=}\mathds{1}^{1{\times}N}(\textbf{e}_\text{F}{-}\textbf{e}_\text{B})$. Thus, we have the total energy loss $e^\text{Total}_\text{Loss}(\textbf{s}){=}e^\text{MC}_\text{Loss}(\textbf{s}){-}e_\text{Rcv}^\text{Nodes}(\textbf{s})$. As $e^\text{MC}_\text{Loss}(\textbf{s})$ and $e^\text{Nodes}_\text{Rcv}(\textbf{s})$ are both completely determined by the energy transmission time list $\textbf{t}$ embedded in schedule $\textbf{s}$, they can also be expressed as $e^\text{MC}_\text{Loss}(\textbf{t})$ and $e^\text{Nodes}_\text{Rcv}(\textbf{t})$. Since $e^\text{MC}_\text{Loss}(\textbf{t}){-}e^\text{Nodes}_\text{Rcv}(\textbf{t})$ represents the energy loss resulting from energy transmission, we call it charging energy loss and define it as in Eq.~\eqref{eq_egy_loss_wpt}.

\begin{equation}
\label{eq_egy_loss_wpt}
e^\text{WPT}_\text{Loss}(\textbf{t}){:=}e^\text{MC}_\text{Loss}(\textbf{t}){-}e^\text{Nodes}_\text{Rcv}(\textbf{t}).
\end{equation}

The goal here is to minimize the total energy loss while satisfying the energy demands of all nodes. With this objective, our targeted problem in this paper, named the Asymmetric DMC Charging Schedule (ADMCCS) problem, can be formally stated as follows:

\textbf{ADMCCS Problem: Given an RA-WRSN consisting of a DMC, $\mathbf{N}$ nodes, and a BS at position $l_0$, where the nodes have parameters including initial energy list $\mathbf{e}_\text{B}$, energy demand list $\mathbf{e}_\text{D}$, and battery capacity list $\mathbf{e}_\text{C}$, and the DMC has parameters including energy transmission sector angle $\mathbf{\varphi}$, transmission power $\mathbf{p_0}$ (corresponding to charge distance $\mathbf{D}$), and move speed $\mathbf{\overline{v}}$ (m/s), the task is to find a charging schedule $\mathbf{s}_\text{CS}$ with minimum energy loss $\mathbf{e}^\text{Total}_\text{Loss}$ while ensuring that all charging demands are satisfied.}

The ADMCCS problem can be formulated as in Eq.~\eqref{eq_ADMCCS_problem_math}.

\begin{equation}
\label{eq_ADMCCS_problem_math}
\begin{array}{lll}
\textbf{(P1)}&\mathop{\min}\limits_{
\tiny
\makecell
{\mathcal{L},
\textbf{r}(\mathcal{L}),\mathcal{S}_\varphi(\mathcal{L}),
\textbf{t}^\text{MC}_\text{Tran}
}
}
&  e_\text{Loss}^{\text{Total}}(
\textbf{r}(\mathcal{L}),
\mathcal{S}_{\varphi}(\mathcal{L}),
\textbf{t}^{\text{MC}}_{\text{Tran}}
), \\
&\text{s.t}.&C1:\textbf{e}_\textbf{F}{\geq}\textbf{e}_\textbf{B}{+}\textbf{e}_\textbf{D};\\
&&C2:\textbf{t}^\text{MC}_\text{Tran}{\geq}\textbf{0};\\
&&C3: \mathcal{L}{\in}\Omega, \textbf{r}(\mathcal{L}){\in}\Omega_\mathcal{L},\mathcal{S}_\varphi{\in}[0,2\pi)
;
\end{array}
\end{equation}

In Eq.~\eqref{eq_ADMCCS_problem_math}, $\Omega$ denotes the solution space of charging positions, i.e., the region accessible for charging, and $\Omega_\text{r}(\mathcal{L})$ denotes the solution space of valid charging tours traversing all positions in $\mathcal{L}$. The interval $[0,2\pi)$ represents the space of charging directions. $C1$ ensures that all nodes' energy demands are satisfied. Here, for two vectors ${\mathbf{a}}{:=}[a_1,a_2,\ldots,a_n]$ and ${\mathbf{b}}{:=}[b_1,b_2,\ldots,b_n]$, ${\mathbf{a}}{\ge}{\mathbf{b}}$ means $a_i{\ge}b_i$ for all $i{\in}\{1,2,\ldots,n\}$.

In the objective function in Eq.~\eqref{eq_ADMCCS_problem_math}, the dependencies of $e_\text{Loss}^{\text{Total}}$ on $\textbf{r}(\mathcal{L})$, $\mathcal{S}_{\varphi}(\mathcal{L})$, and $\textbf{t}^{\text{MC}}_{\text{Tran}}$ are explicitly emphasized. 

\begin{theorem}
The ADMCCS problem is NP-hard.
\end{theorem}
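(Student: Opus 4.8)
The plan is to establish NP-hardness by a polynomial-time reduction from a known NP-hard problem, and the natural candidate is the \emph{Asymmetric Traveling Salesman Problem} (ATSP), given that the routing-asymmetric tour design $\textbf{r}(\mathcal{L})$ is a sub-structure of $(P1)$. Equivalently, one can reduce from the (symmetric) metric TSP, since a symmetric instance is a special case of an asymmetric one; I would use whichever makes the gadget bookkeeping cleaner, and I expect the symmetric metric TSP to suffice because asymmetry only enlarges the class of instances.

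First I would set up the reduction. Given a TSP (or ATSP) instance on a complete graph with vertex set $V$ and a distance/cost function $d(\cdot,\cdot)$, I construct an ADMCCS instance in which the candidate charging positions $\mathcal{L}$ are exactly the vertices of $V$ (with one distinguished vertex playing the role of the BS $l_0$), the RA distance matrix $\textbf{D}$ is taken to be the given $d(\cdot,\cdot)$, and the RA energy-consumption-rate matrix $\textbf{W}$ is set to a constant $w_0$ so that, by Eq.~\eqref{eq_dmconecost_egy}, the movement energy consumption of a tour is exactly $w_0$ times its length. Next I would neutralize the charging/WPT part of the objective: I place exactly one node at (or infinitesimally close to) each candidate position, give every node a small energy demand $e_\text{D}^i$ that can be met \emph{only} by transmitting from the co-located position along a single fixed direction, and set the capacities $\textbf{e}_\text{C}$ so that the minimum feasible transmission time vector $\textbf{t}^\text{MC}_\text{Tran}$ is forced and identical across all feasible schedules. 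Then the charging energy loss $e^\text{WPT}_\text{Loss}(\textbf{t})$ in Eq.~\eqref{eq_egy_loss_wpt} and the received energy are constants independent of the tour, so by $e^\text{Total}_\text{Loss}(\textbf{s})=e^\text{MC}_\text{Loss}(\textbf{s})-e^\text{Nodes}_\text{Rcv}(\textbf{s})$ and Eqs.~\eqref{eq_dmctour_egy}--\eqref{eq_dmc_egy_total}, minimizing $e^\text{Total}_\text{Loss}$ reduces to minimizing the movement energy $e^\text{MC}_\text{Move}(\textbf{r}(\mathcal{L}))=w_0\cdot\big(\text{tour length}\big)$ over tours that visit every position and return to $l_0$. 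Hence an optimal ADMCCS schedule yields an optimal ATSP/TSP tour and vice versa, and the reduction is clearly polynomial-time.

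After the construction I would argue correctness in the usual two directions: (i) any ATSP/TSP tour of cost $\le B$ induces a feasible ADMCCS schedule of total energy loss $\le w_0 B + (\text{const})$ by walking the tour and performing the forced transmissions; (ii) conversely, any feasible ADMCCS schedule of loss $\le w_0 B + (\text{const})$ must, because the constant charging loss is unavoidable and the node demands force visiting every position, trace a closed walk through all positions of length $\le B$, from which a tour of cost $\le B$ is extracted (using the triangle inequality / shortcutting in the metric case, or directly in the complete-graph ATSP case). I would also note the decision-version phrasing to make "NP-hard" precise. The main obstacle I anticipate is not the tour part but \emph{decoupling the WPT layer}: I must choose the node placement, demands, capacities, and the sector-angle/charge-distance parameters $(\varphi, D)$ so that the feasible transmission-time vector is essentially unique and the induced charging loss is a fixed additive constant, ensuring the objective is genuinely governed by the tour length alone; a secondary technical point is handling the continuous charging-position space $\Omega$ and the continuous direction space $[0,2\pi)$ so that the optimum is provably attained at the discrete gadget configuration rather than at some exotic continuous configuration — this can be done by a separation/granularity argument placing candidate positions far apart relative to $D$.
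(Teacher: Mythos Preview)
Your proposal is correct and follows essentially the same strategy as the paper: neutralize the WPT/charging component so that the objective collapses to the movement energy of a closed tour through all required positions, which is ATSP. The paper, however, achieves this with a single stroke rather than the gadgetry you outline: it simply sets the charge distance $D=0$, which forces the DMC to visit every node exactly at its location (so $\mathcal{L}$ is pinned to the node set and the continuous position/direction spaces you worry about become irrelevant) and makes the charging energy per node a fixed constant independent of order; hence the restricted ADMCCS instance \emph{is} ATSP. This one-parameter restriction cleanly dispatches the three obstacles you flag (decoupling the WPT layer, the continuous $\Omega$, and the continuous $[0,2\pi)$), so you can replace your separation/granularity and forced-transmission arguments with the single assumption $D=0$.
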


\begin{proof}
We prove this by showing that a restricted version of ADMCCS is indeed the Asymmetric Traveling Salesman Problem (ATSP), which is known to be NP-hard \cite{Papadimitriou2006}.

To this end, consider the ADMCCS problem with the following additional constraints: The charging distance $D$ of the DMC is very small, i.e., $D{=}0$, such that the DMC can only charge a node exactly at the node's location. This constraint prohibits the M2M charging mode and allows only the O2O charging mode. Thus, the DMC has to charge the nodes one by one, providing exactly the amount of energy that satisfies their energy demands—no more, no less. In this situation, the charging order makes no difference in terms of charging energy consumption and the total energy received by the nodes. As a consequence, the quality of a charging schedule is solely determined by its movement energy consumption, which is completely determined by the charging trajectory of the DMC. 

Therefore, finding an optimal charging schedule with minimum energy loss for the restricted version of the ADMCCS problem corresponds to the task of finding an optimal charging trajectory of the DMC with minimum movement energy consumption, which is indeed the ATSP problem.
\end{proof}

\section{Solve the ADMCCS Problem}
\label{sec:Solving}

As the ADMCSS problem is NP-hard, we focus on designing efficient approximation algorithms. To achieve this, and inspired by \cite{gao_energy_2019}, we address the problem in four steps, as outlined in Fig.~\ref{fig1} in Sec.~\ref{sec:intro}.

\begin{enumerate}[\textbf{Step} 1:]
\item Select a minimum-size set of charging positions that ensures all nodes can be charged by the DMC from some of these positions.
\item Select a minimum-size set of functional-equivalent charging directions at the charging positions.
\item Determine the energy transmission time duration along the directions at all the charging positions.
\item Find an energy-efficient loop tour to traverse the charging positions in RA-WRSNs.
\end{enumerate}

\subsection{Select Minimum-Size Charging Position Set}

Considering the charge distance $D$, we define a node as being covered by a charging position if the distance between the node and the charging position is not greater than $D$. A charging position set that ensures all nodes are covered by at least one position in the set is called a valid charging position set.

In this step, our goal is to determine a valid charging position set with the minimum size and the smallest total sum of a distance metric. Let $\mathcal{L}$ denote a charging position set, and let $\Omega$ denote the solution space of charging position sets. The task in this step can be formulated as in Eq.~\eqref{eq_ADMCCS_step1_chrgpos}.

\begin{equation}
\label{eq_ADMCCS_step1_chrgpos}
\begin{array}{lll}
\textbf{(P2)}&\mathcal{L}^*{=}&\arg\mathop{\min}\limits_{\mathcal{L}} |\mathcal{L}|{+}\lambda\sum_{l_i{\in}\mathcal{L}}d^{\max}_{i}, \\
&\text{s.t}.&C4:d^{\max}_{i}{=}\mathop{\min}\limits_{l_j{\in}\mathcal{L}}d(u_i,l_j), u_i{\in}\mathcal{U}.\\
&&C5:d^{\min}_{i}{\leq}D, u_i{\in}\mathcal{U}.\\
&&C6:d^{\max}_{j}{=}\mathop{\max}\limits_{
\scriptsize 
\makecell{
u_i{\in}\{u_i|u_i{\in}\mathcal{U},\\d(u_i,l_j){\leq}D\}
}
}d(u_i,l_j), l_j{\in}\mathcal{L};\\
&&C7:\mathcal{L}{\in}\Omega_L.
\end{array}
\end{equation}

In Eq.~\eqref{eq_ADMCCS_step1_chrgpos}, the objective is a combination of the size of the charging position set and the sum of the distances between a charging position and the farthest node covered by it. The coefficient $\lambda$ adjusts the relative importance of the distance metric and the number of charging positions $|\mathcal{L}|$. Constraint $C4$ defines $d^{\max}_{i}$ as the smallest distance between node $u_i$ and a charging position in $\mathcal{L}$, and constraint $C5$ requires that each node must be covered by at least one charging position in $\mathcal{L}$. Constraint $C6$ defines $d^{\max}_{j}$ as the maximum distance from a node covered by a charging position $l_j{\in}\mathcal{L}$ to the position itself, and constraint $C7$ restricts the solution space.

\begin{theorem}
The problem \textbf{P2} is NP-complete.
\end{theorem}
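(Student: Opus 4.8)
The plan is to show \textbf{P2} is NP-complete by (i) arguing membership in NP, and (ii) reducing a known NP-complete problem to it. For membership, note that a candidate solution is a finite charging position set $\mathcal{L}$ together with, for each node, a witnessing covering position; one can verify in polynomial time that every node $u_i$ satisfies $C5$ (its nearest position in $\mathcal{L}$ is within $D$), recompute the $d^{\max}_j$ values from $C6$, evaluate the objective $|\mathcal{L}|+\lambda\sum_{l_j\in\mathcal{L}}d^{\max}_j$, and compare it against the target threshold. (One must phrase \textbf{P2} in its decision form — ``is there a valid $\mathcal{L}$ with objective at most $B$?'' — which is the standard convention for NP-completeness claims.)

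For the hardness direction, the natural choice is to reduce from \textbf{Geometric Set Cover} / \textbf{Discrete Unit Disk Cover}, or more simply from plain \textbf{Set Cover}, exploiting the fact that $C5$ alone (forcing every node to be covered by some position within distance $D$) already encodes a covering constraint. First I would take a Set Cover instance with universe elements and subsets, and realize each subset as a candidate charging position placed so that exactly the elements of that subset lie within distance $D$ of it (a discrete-candidate version makes this placement routine; alternatively one restricts $\Omega_L$ in $C7$ to the finite candidate set, which is legitimate since $\Omega_L$ is an arbitrary admissible region in the problem statement). Then choose $\lambda$ small enough — in particular $\lambda$ times the maximum possible value of $\sum_j d^{\max}_j$ is less than $1$ — so that the additive distance term in the objective can never compensate for adding an extra position; under this choice, minimizing the objective is dominated by minimizing $|\mathcal{L}|$, and a set family of size $k$ covers the universe iff \textbf{P2} admits a solution of objective below $k+1$. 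This gives the desired polynomial-time reduction.

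The main obstacle is the geometric realizability of the reduction: one must ensure that for an arbitrary Set Cover instance there really exist positions in the plane (or whatever ambient space $\Omega$ lives in) whose distance-$D$ disks induce exactly the prescribed incidence pattern between positions and nodes. In full generality not every set system is realizable by unit disks, so the cleanest route is either (a) to reduce instead from a geometric covering problem (Discrete Unit Disk Cover / minimum disk cover) that is already known NP-hard and whose incidence structure is geometric by construction, or (b) to invoke the clause in $C7$ that lets us restrict $\mathcal{L}$ to a finite admissible candidate set $\Omega_L$, thereby sidestepping realizability entirely and making the reduction from abstract Set Cover immediate. I would present route (b) as the primary argument for its simplicity, remarking that route (a) shows the hardness persists even under purely geometric (unrestricted-plane) candidate spaces. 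The remaining details — checking the threshold arithmetic and confirming the reduction is polynomial in the input size — are routine.
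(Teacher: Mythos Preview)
Your proposal is correct and, in fact, more careful than the paper's own argument. The paper's proof is a two-line special-case observation: drop the weighted distance term from the objective (equivalently, take $\lambda=0$), and \textbf{P2} collapses to the Unit Disk Cover problem of Fowler et al., which is already known to be NP-complete; hence \textbf{P2} is NP-complete. That is precisely your route~(a), invoked directly rather than through a small-$\lambda$ threshold argument.

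Your preferred route~(b) --- reducing from abstract \textsc{Set Cover} by exploiting $C7$ to restrict $\Omega_L$ to a finite candidate set and then choosing $\lambda$ small enough that the distance term cannot trade off against $|\mathcal{L}|$ --- is a genuinely different argument. It is more self-contained (no appeal to geometric-covering hardness results) and makes the role of $\lambda$ explicit, at the mild cost of needing the interpretation that $\Omega_L$ may be discrete. The paper also omits the NP-membership half entirely, so your treatment of certificate verifiability and the passage to the decision version is strictly more complete than what appears in the source. Either approach suffices; the paper's is terser, yours is more robust.
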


\begin{proof}
By removing the sector term in the objective function of P1, problem P1 is reduced to a new problem, which is essentially the Unit Disk Cover (UDC) problem \cite{Fowler1981}. The UDC problem is stated as follows: In the Euclidean plane, given a set $P{=}\{p_1,p_2,\cdots,p_n\}$ of $n$ points and a set of $m$ unit disks with centers in a set ${L}{=}\{l_1,l_2,\cdots,l_m\}$, the task is to determine a minimum-size set $L^*{\subseteq}L$ such that all points in $P$ are covered by the union of the unit disks with centers in $L^*$. The UDC problem is NP-complete \cite{Fowler1981}; therefore, problem P2 is NP-complete.
\end{proof}

Since P2 is NP-hard, we solve it in two steps guided by its objective function: (1) find a valid charging position set with minimum size; (2) refine the charging positions in the set to minimize $\sum_{l_i{\in}\mathcal{L}^*}d^{\max}_{i}$.

The node set covered by a charging position can be regarded as a cluster, with the charging position serving as the cluster center. Thus, clustering methods can be employed to address the first step. We propose a K-means Charging Position Generation (KCPG) algorithm to return a valid charging position set in polynomial time. This algorithm automatically adjusts the position set size and ultimately obtains the clusters and their center points. The specific code is provided later in Section~\ref{sec:RA-DMCS}.

The second step involves determining the smallest enclosing circle that covers each node cluster, returning the center point and the diameter of the smallest circle. This can be solved using the well-known Welzl algorithm \cite{Welzl1991lncs} or by using mature software packages such as MATLAB or \texttt{SciPy}.

\subsection{Determine Minimum-Size Functional-Equivalent charging direction Set}

At each charging position $l_i{\in}\mathcal{L}^*$, the DMC should transmit energy along selected directions. The DMC can take any direction within the infinite continuous range $[0,2\pi)$. Given that the charge distance $D$ and the charge sector angle $\varphi$ of the DMC are constants, we can use the Create Minimum Functional Representative Direction Set (cMFRAS) algorithm from \cite{gao_scheduling_2023} to obtain a minimum-size discrete direction set that is functionally equivalent to the set $[0,2\pi)$. These directions are then used as the charging directions. The optimality of cMFRAS was established in \cite{gao_scheduling_2023}.

In this paper, following the process outlined in \cite{gao_scheduling_2023}, we determine the functionally equivalent direction set for each position and collect them into the set of Pos-Dir pairs $\mathcal{S}_\text{PosDir}$. Let us assume $|\mathcal{S}_\text{PosDir}|{=}K$. Each Pos-Dir pair is then regarded as a Virtual MC (VMC). As described in Sec.~\ref{sec_wpt_egy_transfer_model}, the energy transfer coefficients from the VMCs to the nodes are collected into an energy transfer coefficient matrix $C$, where each row corresponds to a VMC, and each column corresponds to a node.

\subsection{Determine Time Duration for all charging directions}

Once the set $\mathcal{S}_\text{PosDir}$ of Pos-Dir pairs has been determined, we need to determine the energy transmission times of the DMC along the specified charging directions at the corresponding charging positions in each Pos-Dir pair, or, for short, \textbf{along the Pos-Dir pairs}. The complete set of energy transmission times determines the amount of energy that the nodes in the RA-WRSN can receive.

Recall that $\textbf{t}{:=}[t_1,t_2,\ldots,t_{K}]^{\text{T}}$ denotes the column vector of energy transmission times along the Pos-Dir pairs in a charging schedule $\textbf{s}$. Since the movement energy consumption of the DMC is not affected by $\textbf{t}$, minimizing the total energy loss $e^\text{Total}_\text{Loss}(\textbf{s})$ is equivalent to minimizing $e^\text{MC}_\text{Tran}(\textbf{t}){-}e^\text{Nodes}_\text{Rcv}(\textbf{t})$. As the DMC transmits energy with a constant power $p_0$, minimizing the charging energy consumption of the DMC is identical to minimizing the sum of the elements in $\textbf{t}$. Therefore, the transmission times can be determined by solving the problem presented in Eq.~\eqref{eq_energy_step2_posdirtime}.

\begin{equation}
\label{eq_energy_step2_posdirtime}
\begin{array}{lll}
\textbf{(P3)}&\textbf{t}^*{=}&\arg\min\limits_{\textbf{t}}\mathds{1}^{1{\times}K}\textbf{t}, \\
&s.t.& C1,C2; \\
&&C8:\textbf{t}{\geq}\textbf{0};\\
&&C9:Eq.~\eqref{eq_egy_loss_wpt}.\\
\end{array}
\end{equation}

We use \textbf{Cplex} to solve this problem, and denote the solution as $\textbf{t}^*$. By combining $\textbf{t}^*$ with the Pos-Dir pairs in the set $\mathcal{S}_\text{PosDir}$, we can construct the energy transmission schedule item set $\mathcal{S}^\text{DOS}_\text{Tran}$.

\subsection{Asymmetric Path Planning in RA-WRSN}
\label{sec_ADMCCS_step4_ATSP}

The objective of this step is to determine the minimum movement energy consumption loop tour $\textbf{r}$ that traverses all charging positions in $\mathcal{L}^*$, which were determined in the first step. In RA-WRSNs, the RA distances and energy consumption rates of the bidirectional path segments between points determine the travel time and energy consumption of the loop path. Routing asymmetry is the main challenge in this step. Recall that the effects of the routing asymmetry factors are ultimately manifested in the RA distance matrix $\textbf{D}$ and the RA movement energy consumption matrix $\textbf{W}$.

Let $\textbf{r}_\pi{:=}[l_{\pi_0}{=}l_0,l_{\pi_1},\ldots,l_{\pi_L},l_0]$ represent a charge tour consisting of $L{+}2$ points. Let $\{\textbf{r}_\pi\}$ denote the position set contained in $\textbf{r}_\pi$, and let $|\textbf{r}_\pi|$ denote the number of points in $\textbf{r}_\pi$, i.e., $|\textbf{r}_\pi|{=}L{+}2$. The movement energy consumption of the tour $\textbf{r}_\pi$ is $e^\text{MC}_\text{Move}(\textbf{r}_\pi)$. The task of asymmetric path planning for RA-WRSNs in this step can be formulated as follows:

\begin{equation}
\label{eq_ADMCCS_step4_ATSP}
\begin{array}{lll}
\textbf{(P4)}&\textbf{r}_\pi^*{=}&\arg\mathop{\min}\limits_{\textbf{r}_\pi}  e^\text{MC}_\text{Move}(\textbf{r}_\pi) \\
&s.t. & C10: \{\textbf{r}_\pi\}{=}\mathcal{L}^*{\cup}\{l_0\}, \\
&& C11: |\textbf{r}_\pi|{\geq}|\mathcal{L}^*|{+}1, \\
\end{array}
\end{equation}

In Eq.~\eqref{eq_ADMCCS_step4_ATSP}, constraint $C10$ ensures that all positions in $\mathcal{L}^*{\cup}\{l_0\}$ are traversed by the charge tour $\textbf{r}_\pi$. Constraint $C11$ ensures that the tour $\textbf{r}_\pi$ visits each point in $\mathcal{L}^*$ at least once.

In a symmetric Euclidean routing environment, due to the triangle inequality, a tour usually passes through each node exactly once. However, in an RA-WRSN, the triangle inequality does not necessarily hold, so sometimes a tour with sub-loops may be more efficient than a non-sub-loop tour. Therefore, $C10$ allows tours with sub-loops, which makes problem P4 different from the ATSP.

\begin{theorem}
Problem \textbf{P4} is NP-hard.
\end{theorem}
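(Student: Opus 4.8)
The plan is to reduce the Asymmetric Traveling Salesman Problem (ATSP) to \textbf{P4}, mirroring the strategy used in the proof that the ADMCCS problem is NP-hard, but taking care of the one genuine difference: \textbf{P4} permits tours with sub-loops (constraint $C10$ uses set equality rather than forcing each vertex to be visited exactly once), so I cannot simply assert \textbf{P4} \emph{is} ATSP. Instead I would show that, under the RA movement energy consumption model of Eq.~\eqref{eq_dmconecost_egy}, an optimal solution of \textbf{P4} can be assumed to visit each vertex exactly once whenever the underlying cost matrix satisfies the triangle inequality, and then invoke the fact that ATSP restricted to instances obeying the triangle inequality is still NP-hard \cite{Papadimitriou2006}.

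Concretely, the key steps in order: First I would take an arbitrary ATSP instance, i.e., a complete directed graph on vertices $\{v_0, v_1, \ldots, v_n\}$ with nonnegative arc costs $c(v_i, v_j)$ that may be assumed (by the standard metric-completion argument) to satisfy $c(v_i,v_j) \le c(v_i,v_k) + c(v_k,v_j)$. Second, I would build a \textbf{P4} instance: set $\mathcal{L}^* \cup \{l_0\} = \{v_0, v_1, \ldots, v_n\}$ with $l_0 = v_0$, and choose the RA distance matrix $\mathbf{D}$ and RA energy-rate matrix $\mathbf{W}$ so that the per-segment movement energy consumption $e^{\text{MC}}_\text{Move}(l_i,l_j) = d(l_i,l_j)\,w(l_i,l_j)$ equals $c(v_i,v_j)$ — for instance by taking $w(l_i,l_j) \equiv w_0$ constant and $d(l_i,l_j) = c(v_i,v_j)/w_0$, which is consistent with Eq.~\eqref{eq_ra_dis}–\eqref{eq_ra_egy} since the RA coefficients are arbitrary real scale values. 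Third, I would argue that because the resulting segment costs inherit the triangle inequality, any tour with a sub-loop can be short-cut into a tour visiting each vertex exactly once without increasing $e^{\text{MC}}_\text{Move}$; hence \textbf{P4}'s optimum coincides with the optimal Hamiltonian cycle cost, i.e., the ATSP optimum. Fourth, since this reduction is clearly polynomial and a \textbf{P4} solution decodes to an ATSP solution in polynomial time, and ATSP is NP-hard even on triangle-inequality instances, \textbf{P4} is NP-hard.

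The main obstacle is the third step: rigorously handling the sub-loop relaxation. I need the short-cutting argument to be airtight — given a closed walk that covers all of $\{v_0,\ldots,v_n\}$ but repeats some vertices, I remove repeated occurrences one at a time, each time replacing a sub-path $v_a \to v_b \to v_c$ through a to-be-deleted vertex $v_b$ by the direct arc $v_a \to v_c$, and invoke the triangle inequality to show the cost does not go up; iterating yields a Hamiltonian cycle of no greater cost. One subtlety worth a sentence is that the walk must remain a single closed tour after deletions (it does, since deleting an interior vertex of a closed walk leaves a shorter closed walk), and another is that I should note $C11$, $|\mathbf{r}_\pi| \ge |\mathcal{L}^*| + 1$, is automatically satisfied by a Hamiltonian cycle on $|\mathcal{L}^*| + 1$ vertices. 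With the triangle-inequality-preserving choice of $\mathbf{D}$ and $\mathbf{W}$ in place, everything else is bookkeeping, so I would keep the written proof short: exhibit the reduction, state the short-cut lemma with a one-line justification, and conclude.
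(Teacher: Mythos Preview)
Your proposal is correct and, in fact, more careful than the paper's own proof. The paper's argument is a two-liner: it observes that restricting \textbf{P4} to disallow sub-tours yields exactly the ATSP, and since ATSP is NP-hard, so is \textbf{P4}. You correctly put your finger on the weak spot in that reasoning: the restriction is on the \emph{solution space}, not on the instance space, so the step ``restricted version is NP-hard $\Rightarrow$ original is NP-hard'' is not, as written, a valid reduction --- a \textbf{P4} oracle could in principle return a sub-loop tour strictly cheaper than every Hamiltonian cycle, which would tell you nothing about the ATSP optimum. Your route (encode a metric ATSP instance into the RA matrices $\mathbf{D},\mathbf{W}$, then use the triangle inequality and a shortcutting argument to show the \textbf{P4} optimum coincides with the Hamiltonian optimum on those instances) closes that gap cleanly. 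The cost is a slightly longer proof; the benefit is that it is actually a reduction. One small wording fix: rather than saying arbitrary ATSP costs ``may be assumed'' metric via completion, simply cite directly that ATSP restricted to triangle-inequality instances is already NP-hard, since metric completion can change the Hamiltonian optimum in general and you only need the metric case.
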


\begin{proof}
By applying an additional restriction that does not allow sub-tours, we obtain a restricted version of Problem P4. This restricted version is simply the Asymmetric Traveling Salesman Problem (ATSP). Since the ATSP is NP-hard \cite{Papadimitriou2006}, Problem P4 is also NP-hard.
\end{proof}

To solve P4, we use the most efficient and powerful heuristic algorithm, known as LKH, which achieves state-of-the-art performance in both solution quality and running speed. LKH performs exceptionally well in solving the ATSP.

Instead of solving the ATSP directly, another typical approach is to first transform the ATSP into a TSP, which can be done using the method proposed in ~\cite{Jonker1983}, and then solve it using traditional TSP algorithms. However, this transformation doubles the number of points to be visited, which significantly challenges TSP algorithms and greatly hinders their performance. This drawback is validated by the experiments in Sec.~\ref{sec_atsp_vs_transorm_tsp}.

Let $\textbf{r}_\pi^*$ denote the energy-minimizing path obtained using LKH. Based on $\textbf{r}_\pi^*$, we can derive the movement time duration list $\textbf{t}^{MC}_\text{Move}{=}[d(l_{\pi_0},l_{\pi_1})/\overline{v},d(l_{\pi_1},l_{\pi_2})/\overline{v},\ldots,d(l_{\pi_L},l_0)/\overline{v}]$, thereby determining all movement schedule items $\mathcal{S}^\text{DOS}_\text{Move}$.

Additionally, based on $\textbf{r}_\pi^*$, we arrange the energy transmission schedule items $\mathcal{S}^\text{DOS}_\text{Tran}$. For the schedule items corresponding to the directions of a single charging position, the items are sorted in ascending order of direction angles.

By combining the items in $\mathcal{S}^\text{DOS}_\text{Move}$ and $\mathcal{S}^\text{DOS}_\text{Tran}$, we can easily construct a charging schedule $\textbf{s}_{\text{MC}}$ as the final solution to the ADMCCS problem.

\section{The RA-DMCS Algorithm}
\label{sec:RA-DMCS}

Based on the analyses in Sec.~\ref{sec:ADMCCS}, we propose a heuristic algorithm, named RA-DMCS, to approximately solve the ADMCCS problem. As shown in the pseudocode in Alg.~\ref{alg_RA-DMCS}, RA-DMCS accepts inputs such as $e_\text{B}$, $e_\text{C}$, and generates a charging schedule $\textbf{s}_\text{MC}$. As outlined in Fig.~\ref{fig1}, RA-DMCS consists of four steps. It selects charging positions in step 1 (code line~\ref{alg_RA-DMCS_line_step1}), and determines charging directions in step 2 (code line~\ref{alg_RA-DMCS_line_step2}). In step 3, it formulates the problem following Eq.~\eqref{eq_energy_step2_posdirtime} (code line~\ref{alg_RA-DMCS_line_step3_1}), determines charging times for all Pos-Dir pairs (code line~\ref{alg_RA-DMCS_line_step3_2}), and constructs the schedule item set $\mathcal{S}^\text{MC}_{Tran}$ (code line~\ref{alg_RA-DMCS_line_step3_3}). In step 4, it generates a charging tour $\textbf{r}^*$ using the LKH algorithm (code line~\ref{alg_RA-DMCS_line_step4_lkh}) and constructs the schedule item set $\mathcal{S}^\text{MC}_{Move}$ (code line~\ref{alg_RA-DMCS_line_step4_scheduleset}). Finally, it constructs $\textbf{s}_\text{MC}$ and returns the result.

\begin{algorithm}
\caption{The RA-DMCS algorithm}
\label{alg_RA-DMCS}
\begin{algorithmic}[1]
    \item[\hspace{0.5em}\textbf{Input:}] $U$, $e_\text{B}$, $e_\text{D}$, $e_\text{C}$, $p_0$, $e_\text{B0}$, $D$, $\varphi$, $\overline{v}$; 
    \item[\hspace{0.5em}\textbf{Output:}] $s_{MC}$;
    \State Initial: $C$,\ $S_{dir}$;
    \State $\mathcal{L}^*{=}KCPG(U,D)$; \label{alg_RA-DMCS_line_step1}\Comment{Determine a charging position set, detailed in Alg.~\ref{alg_RA-DMCS}}
    \State Construct Pos-Dir pair set $\mathcal{S}_\text{PosDir}$ following the process in \cite{gao_scheduling_2023} by using the cMFRDS algorithm;\label{alg_RA-DMCS_line_step2}
    \State Construct the problem following Eq.~\eqref{eq_energy_step2_posdirtime};\label{alg_RA-DMCS_line_step3_1}
    \State Obtain $\textbf{t}^*$ using CPLEX;\label{alg_RA-DMCS_line_step3_2}
    \State Construct $\mathcal{S}^\textbf{MC}_\text{Tran}$ based on $\mathcal{S}_\text{PosDir}$ and $\textbf{t}^*$;\label{alg_RA-DMCS_line_step3_3}
   \State $\textbf{r}^*{=}LKH(\mathcal{L}^*,D,W)$;\label{alg_RA-DMCS_line_step4_lkh}
    \State Construct $\mathcal{S}^\text{MC}_\text{Move}$ based on $\textbf{r}^*$ as introduced in Sec.~\ref{sec:ADMCCS};\label{alg_RA-DMCS_line_step4_scheduleset}
    \State Construct $\textbf{s}_\text{MC}$ by combing $\mathcal{S}^\text{MC}_{Move}$ and $\mathcal{S}^\text{MC}_{Tran})$\label{alg_RA-DMCS_line_scheduleset};
    \State \textbf{return} $\textbf{s}_{MC}$;
\end{algorithmic}
\end{algorithm}

\subsection{Using LKH to Solve P2}

We adapted LKH for the ATSP problem to solve P4. LKH \cite{lkh_2019} was originally designed for solving standard TSP problems and was later adapted to the ATSP problem. According to \cite{lkh_web}, although LKH is heuristically approximate, computational experiments have demonstrated its high efficiency. As stated in \cite{lkh_web}, optimal solutions are produced with an impressively high frequency.

LKH typically starts with a heuristic initial solution, then tries to converge quickly to a high-quality solution by repeatedly applying k-opt sub-tour swaps. A k-opt sub-tour swap in LKH adjusts the tour by replacing $k$ path segments in the tour with $k$ new ones. Each iteration involves searching for the best k-opt sub-tour swap that most effectively reduces the total tour length. A crucial component of LKH is its effective swap quality evaluation method, which efficiently assesses the potential improvement of a k-opt sub-tour swap before it is applied, prioritizing the swaps with higher values to ensure significant reductions in tour length with each swap. LKH dynamically adjusts the value of $k$, allowing it to explore a larger solution space and escape local optima. It also dynamically adjusts the type of k-opt sub-tour swaps to perform, makes instance-specific adaptations, and thus effectively balances exploration and exploitation.

\subsection{KCPG Algorithm}

We propose a K-means Charging Position Generation (KCPG) algorithm to solve Eq.~\eqref{eq_ADMCCS_step1_chrgpos}. The pseudocode for KCPG is shown in Alg.~\ref{alg_KCPG}. The algorithm uses $j$ to denote the number of clusters, initializing it as 1. The K-means clustering algorithm is utilized to group the nodes into $j$ clusters (code line~\ref{alg_codeline_kmeans}). For each cluster, the center of the minimum circle covering the node set in the cluster is obtained (code line~\ref{alg_codeline_mincicle}). More clusters are constructed unless all minimum circles have radii smaller than the charge distance $D$. Finally, the centers of the clusters are returned as $\mathcal{L}$.

\begin{algorithm}
\caption{The KCPG algorithm}
\label{alg_KCPG}
\begin{algorithmic}[1]
    \item[\hspace{0.5em}\textbf{Input:}] Node set $U$, charging distance $D$;
    \item[\hspace{0.5em}\textbf{Output:}] The charging position set $\mathcal{L}$.
    \State Initialize $j{=}1$;\label{alg2_line_status}
    \While{true}
        \State $G{=}$ K\-means$(U,j)$;\label{alg_codeline_kmeans}
        \State $\{\mathcal{L}, R_M\}{=}$ Optimiz$(G)$;\label{alg_codeline_mincicle}
        \State \textbf{return} $\mathcal{L}$ if $r{\leq}D\hspace{2mm} {\forall}r{\in}R_M$ otherwise $j{=}j{+}1$;
    \EndWhile
\end{algorithmic}
\end{algorithm}

\begin{theorem}
The time complexity of the KCPG algorithm is $O(n^3)$.
\end{theorem}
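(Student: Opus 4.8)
The plan is to analyze the KCPG algorithm (Alg.~\ref{alg_KCPG}) line by line, bounding the cost of each operation and the number of iterations of the outer \texttt{while} loop, then combining these bounds. Let $n{=}|U|{=}N$ be the number of nodes. The algorithm repeatedly increments the cluster count $j$ from $1$ upward; in each iteration it performs one K-means clustering of $n$ points into $j$ clusters (line~\ref{alg_codeline_kmeans}) and one \texttt{Optimiz} call that computes, for each of the $j$ clusters, the minimum enclosing circle of the points in that cluster (line~\ref{alg_codeline_mincicle}). First I would argue that the loop terminates with $j{\le}n$: when $j{=}n$, each cluster is a single node, so every minimum enclosing circle has radius $0{\le}D$ and the termination test succeeds. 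Hence the number of outer iterations is at most $n$.

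Next I would bound the work inside a single iteration. For \texttt{Optimiz}: the minimum enclosing circle of a set of $m$ points can be computed by the Welzl algorithm in expected $O(m)$ time (or, if one prefers a deterministic and more conservative bound consistent with ``mature software packages,'' in $O(m^2)$ or $O(m\log m)$ time); summing over the $j$ clusters, whose sizes add up to $n$, gives at most $O(n)$–$O(n^2)$ per iteration, which is dominated by what follows. For K-means (line~\ref{alg_codeline_kmeans}): the standard Lloyd iteration costs $O(njd)$ per pass in dimension $d$ (here $d$ is the constant spatial dimension, $2$), and treating the number of Lloyd passes as bounded by a constant (or, more carefully, bounding it by $O(n)$ in the worst case), one pass costs $O(nj){\le}O(n^2)$ since $j{\le}n$. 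Therefore a single iteration of the outer loop costs $O(n^2)$.

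Combining the two bounds, the total running time is at most (number of iterations) $\times$ (cost per iteration) $=O(n)\cdot O(n^2)=O(n^3)$, which is the claimed bound. I would also remark that $j$ ranges over $1,2,\ldots,j^*$ where $j^*{\le}n$ is the final cluster count, so the total K-means work is $\sum_{j=1}^{j^*}O(nj)=O(n(j^*)^2)=O(n^3)$ in the worst case, giving the same conclusion without needing the crude ``$j{\le}n$ in every iteration'' estimate; either accounting suffices.

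The main obstacle is that K-means (Lloyd's algorithm) does not have a clean worst-case iteration-count bound independent of the input — in the worst case the number of Lloyd passes can be superpolynomial, and even smoothed/expected bounds are delicate. To keep the $O(n^3)$ claim honest I would either (i) invoke the common convention in this literature that the number of K-means refinement iterations is treated as a constant (so K-means is $O(nj)$ per call), or (ii) cap the number of Lloyd passes at a fixed constant $T$ in the implementation, as is standard practice, so that line~\ref{alg_codeline_kmeans} is genuinely $O(nj){=}O(n^2)$. Under either reading the per-iteration cost is $O(n^2)$, the outer loop runs $O(n)$ times, and the stated $O(n^3)$ complexity follows; this is the one place where the proof must state its assumption explicitly rather than rely on a black-box bound.
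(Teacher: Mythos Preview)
Your proposal is correct and follows essentially the same approach as the paper: bound the number of outer iterations by $n$, bound each iteration by $O(n^2)$ using the standard $O(tkn)$ cost for K-means with $t$ treated as a constant and Welzl's linear-time minimum-enclosing-circle routine summed over clusters, and multiply to obtain $O(n^3)$. The paper's proof is terser---it simply asserts that the K-means iteration count $t$ is a constant and writes $T=O(n)\cdot(O(ntk)+O(n))=O(n^3)$---whereas you explicitly justify termination at $j=n$ via singleton clusters and flag the K-means iteration-count assumption as the one place requiring an explicit convention; this extra care is welcome but does not change the argument's structure.
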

\begin{proof}

KCPG iteratively calls the K-means clustering algorithm with an increasing number of clusters initialized as 1. In each iteration for a certain cluster number, K-means is used to cluster the nodes, and then the minimum enclosing circles for covering the nodes in each cluster are obtained using the Welzl algorithm~\cite{Welzl1991lncs}, and the results are checked to determine if the results are acceptable, otherwise a new iteration with the number of clusters increased by 1 is conducted.

According to ~\cite{macqueen_methods_1967}, K-means itself has a time complexity of $O(tkn)$, where $t$ is the number of iterations, $k$ is the number of clusters, $n$ is the number of data points, and $m{=}2$ is the dimension of the data points. The welzl algorithm~\cite{Welzl1991lncs} has time complexity $O(n/k)$, where $n/k$ is the size of a cluster. Running welzl for $k$ clusters requires time $O(k{\cdot}n/k)=O(n)$. Thus, an iteration with $k$ clusters has time complexity $O(ntk){+} O(n)$.

In the worst-case, cluster number will eventually increase to $n$. Given that $t$ and $t_m$ are constants, the time complexity of KCPG can be obtained as 

\begin{equation}
\label{KCPG_complexity_1}
\begin{array}{ll}
T&=O(n){\cdot}(O(ntk){+}O(n)){=}O(n^2tk{+}n^2)\\
&=O(n^2k{+}n^2) = O(n^3).
\end{array}
\end{equation}

\end{proof}

\section{Performance Evaluation}
\label{sec:simula}
In this section, we comparatively evaluate the performance of RA-DMCS against some representative algorithms through experiments. The experiments are coded with python and conducted on a computer with AMD R7-6800HS CPU, 16GB RAM, and Ubuntu 20.04.5 LTS OS.

\subsection{Algorithms for Comparison}
The algorithms in ~\cite{Lin2018jss} and in ~\cite{liang_grouping_2023} are selected for comparison. They are designated as One-to-One Greedy (O2OGre) and Greedy Grouping with Fixed-Direction Ant Colony (GFDA), respectively. 

O2OGre\cite{Lin2018jss} adopts a one-to-one charging approach, not exploiting the opportunity of multiple node simultaneous energy reception from the same energy signal. In O2OGre, the MC's charge tour decision problem is regarded as a symmetric TSP problem and solved using greedy algorithm.

GFDA~\cite{liang_grouping_2023} also adopts a grouped directional charging approach similar to us, but it is devoted to routing symmetric WRSNs. It begins by grouping the nodes using a greedy algorithm prioritized with energy demanding amount. Within each group, the center of the circle is selected as the charging position, and a certain number of charging directions are selected uniformly from the $[0,2\pi)$ range. The MC's charge tour decision problem is also regarded as a symmetric TSP problem, yet solved using an ant colony algorithm.

\subsection{Performance Metrics and Simulation Setup}
Four main performance metrics are employed: total energy loss, tour distance, and time span. The energy loss metric represents the total energy loss consisting of charging energy loss and movement energy consumption. The tour distance metric represents the total distance of the charging tour in a charging schedule. The time span metric of a charging schedule is the sum of the time field of all schedule items. 


Main simulation parameters and their default values are shown in Table~\ref{tab2}. A particular set of the values of the parameters is referred as a simulation setup. The impacts of the parameters on algorithms are investigated by experiments with various simulation setups, which differ only in the value of the parameter being inspected.

\begin{table}[htbp]
\begin{center}
\renewcommand{\arraystretch}{1.3} 
\caption{Simulation Parameters}
\label{tab2}
\centering
\begin{tabular}{>{\centering\arraybackslash}p{1.2cm}|>{\centering\arraybackslash}p{1.8cm}|p{1.2cm}|p{1.8cm}}
\toprule
\textbf{Symbols} & \textbf{Values} & \textbf{Symbols} & \textbf{Values} \\
\midrule
$N$ & 200 & $l_{BS}$ & (100, 100) \\
$\delta $ & 4000 & $\overline{v}$ & 1 m/s\\
$e_\text{C}$ & $60\sim 90$\ J & $k_\text{RA,Dis} $ & $0.5\sim 1.5$ \\
$e_\text{B}$ & $6\sim 36$\ J & $w$ & 4 J/m\\
$e_\text{D}$ & $18{\sim}75$\ J &  $D$ & 20 m\\
$p_0$ & 4 $W_0$ & $\varphi $ & $\pi$/4 \\
$\alpha $ & 100 & $L$ & 200 m  \\
$\beta $ & 2 & $k_\text{RA,Egy}$ & 1 \\
\bottomrule
\end{tabular}
\end{center}
\end{table}

To simulate routing-asymmetry, the RA distance coefficients and RA movement energy consumption rate coefficients defined in Sec.~\ref{sec_wpt_egy_transfer_model} are determined randomly in range [0.5,1.5]. 

To reduce randomness in results, simulations for each simulation setup are repeated for 200 problem instances. The nodes in these instances are randomly distributed within a 200m${\times}$200m area. The performance metrics are averaged over the 200 simulations to obtain the final results, and the 95\% confidence intervals are also calculated.

\subsection{Charge Tour Comparison for a Toy Network}

We visually illustrate the differences between the algorithms by applying them to a toy network with 10 nodes in a $100{\times}100$ area and a BS at position [50,50]. For space limitation, only the RA movement energy consumption rate coefficients for this network are shown in Table~\ref{tab:table3}.

\begin{table}[htbp]
\small
\caption{RA movement energy consumption rate coefficients for the toy network}
  \label{tab:table3}
  \centering
  \renewcommand{\arraystretch}{1.5} 
  \begin{tabular}{c|cccccc} 
    \toprule 
    & \multicolumn{1}{c|}{BS} & \multicolumn{1}{c|}{A} & \multicolumn{1}{c|}{B} & \multicolumn{1}{c|}{C} & \multicolumn{1}{c}{D}\\
    \midrule 
    BS & 0.00 & 1.11 & 1.10 & 1.14 & 1.35\\
    \hline
    A & 0.89 & 0.00 & 1.03 & 1.31 & 1.43\\
    \hline
    B & 0.90 & 0.97 & 0.00 & 1.33 & 1.15\\
    \hline
    C & 0.86 & 0.69 & 0.67 & 0.00 & 1.11\\
    \hline
    D & 0.65 & 0.57 & 0.85 & 0.89 & 0.00\\
    \bottomrule 
  \end{tabular}
\end{table}

The charge tours generated by the algorithms are shown in Fig.~\ref{fig5}, with sub-figures (a), (b), and (c) show the tours determined by RA-DMCS, GFDA, and O2OGre, respectively. Fig.~\ref{fig5}(b) provides the legends. In the sub-figures, the black dashed lines represent the DMC's trajectory, with the number on each path indicating the corresponding movement energy consumption. A green poly-lines indicates the direction of an energy transmission, with the attached number represents the corresponding energy received by the nodes in that direction. The four positions designated as A\~D are the selected charging positions. RA-DMCS exploits the routing-asymmetry property and generates a path outperforms the others. Comparatively, both O2OGre and GFDA could not deal with the routing-asymmetry characteristics well as designed for routing-symmetric environment. 

\begin{figure} 
  \centering
  \includegraphics[width=0.47\textwidth]{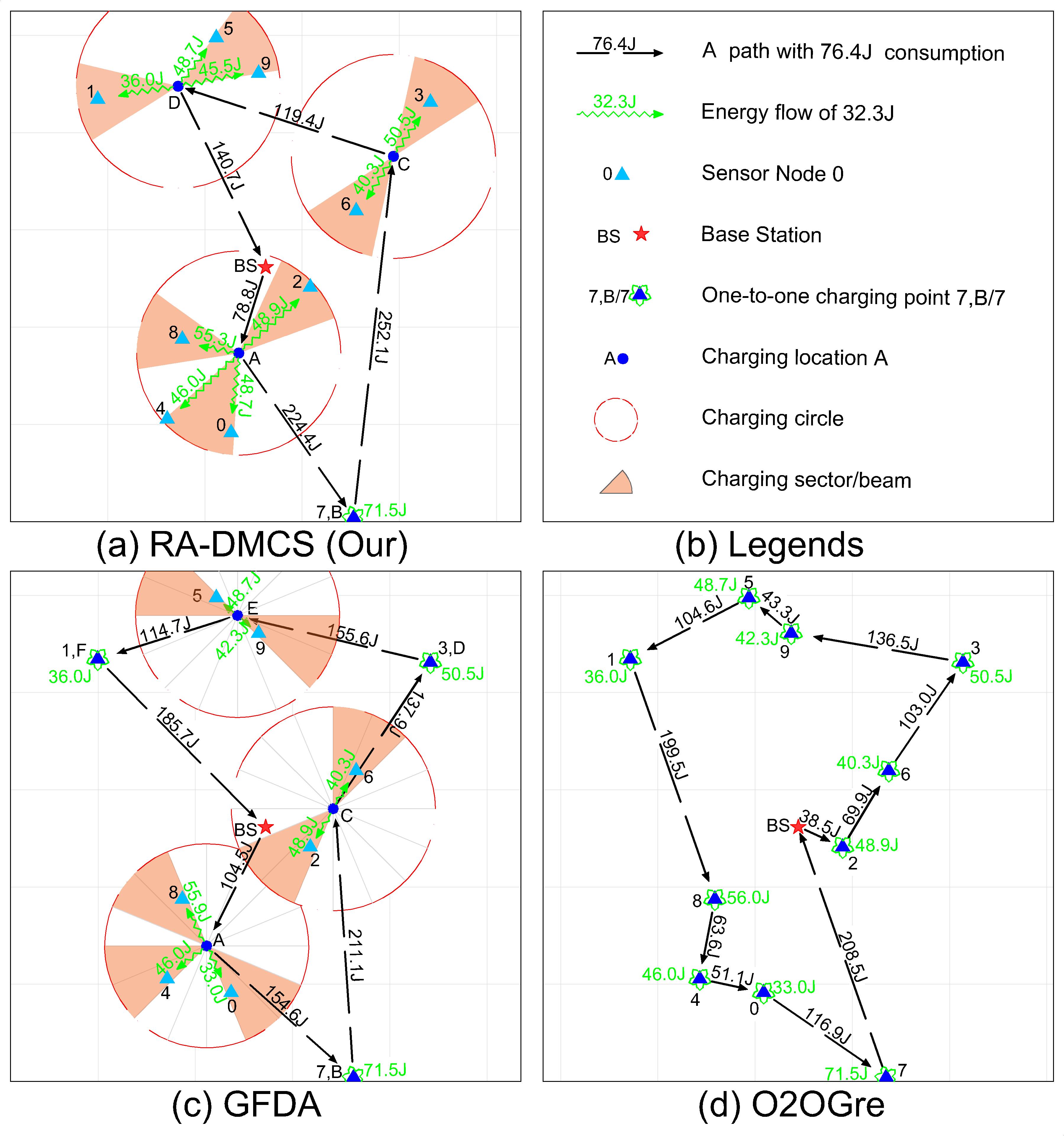}\\
  \caption{Charging tours determined by the algorithms}
  \label{fig5}
\end{figure}

Table~\ref{tab:table4} presents the detailed performance metrics of the algorithms for the network. Notably, RA-DMCS demonstrates lower energy consumption, higher charging efficiency, and shorter time span.

\begin{table}[htbp]
  \caption{Performance of the algorithms for the toy network}
  \label{tab:table4}
  \centering
\small
\renewcommand{\arraystretch}{1.5} 
  \begin{tabular}{>{\centering\arraybackslash}p{3.2cm}|>{\centering\arraybackslash}p{1.4cm}|>{\centering\arraybackslash}p{1.1cm}|>{\centering\arraybackslash}p{1cm}}
    \toprule 
    \textbf{Performance Metric} & \textbf{O2OGre}
    ~\cite{Lin2018jss} & \textbf{RA-DMCS}
    (Our) & \textbf{GFDA}
    ~\cite{liang_grouping_2023} \\
    \midrule 
    Energy loss (J) & 1844.82 & 1583.87 & 1912.02 \\
    \hline
    Time span (s) & 579.43 & 524.63 & 596.23 \\
    \hline
    Charging energy loss (J) & 709.38 & 768.51 & 847.93 \\
    \hline
    Movement energy consumption (J) & 1135.44 & 815.36 & 1064.09 \\
    \hline
    Charging time (s) & 295.57 & 312.78 & 330.21 \\
    \hline
    Moving time (s) & 283.86 & 211.85 & 266.02 \\
    \bottomrule 
  \end{tabular}
\end{table}

\subsection{Simulation Results and Analyses}

We conducted simulation experiments to examine the effects of key parameters on the performance of the algorithms. In these experiments, we make $N$ increase from 50 to 450 with an increment of 50, whereas the other parameters take their default values as in Table~\ref{tab2}. The results are shown in Fig.~\ref{fig6}. Only the results showing the effects of the number of nodes are provided here for space limitation.

\begin{figure*}[!htb]
\centering
    \subfigure[Energy loss]{\includegraphics[width=0.315\textwidth]{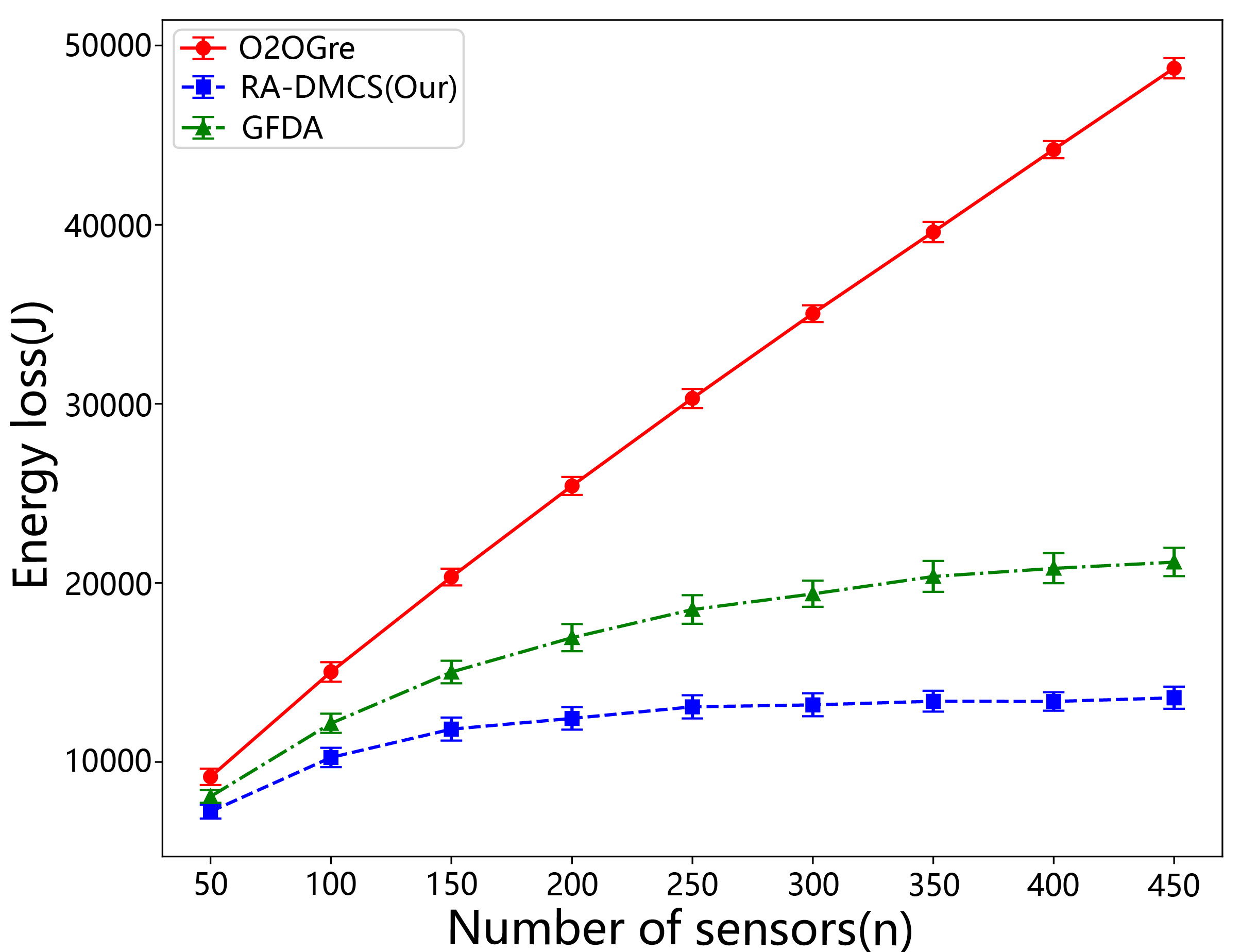}}
    \label{fig6(a)}
    \subfigure[Time span]{\includegraphics[width=0.32\textwidth]{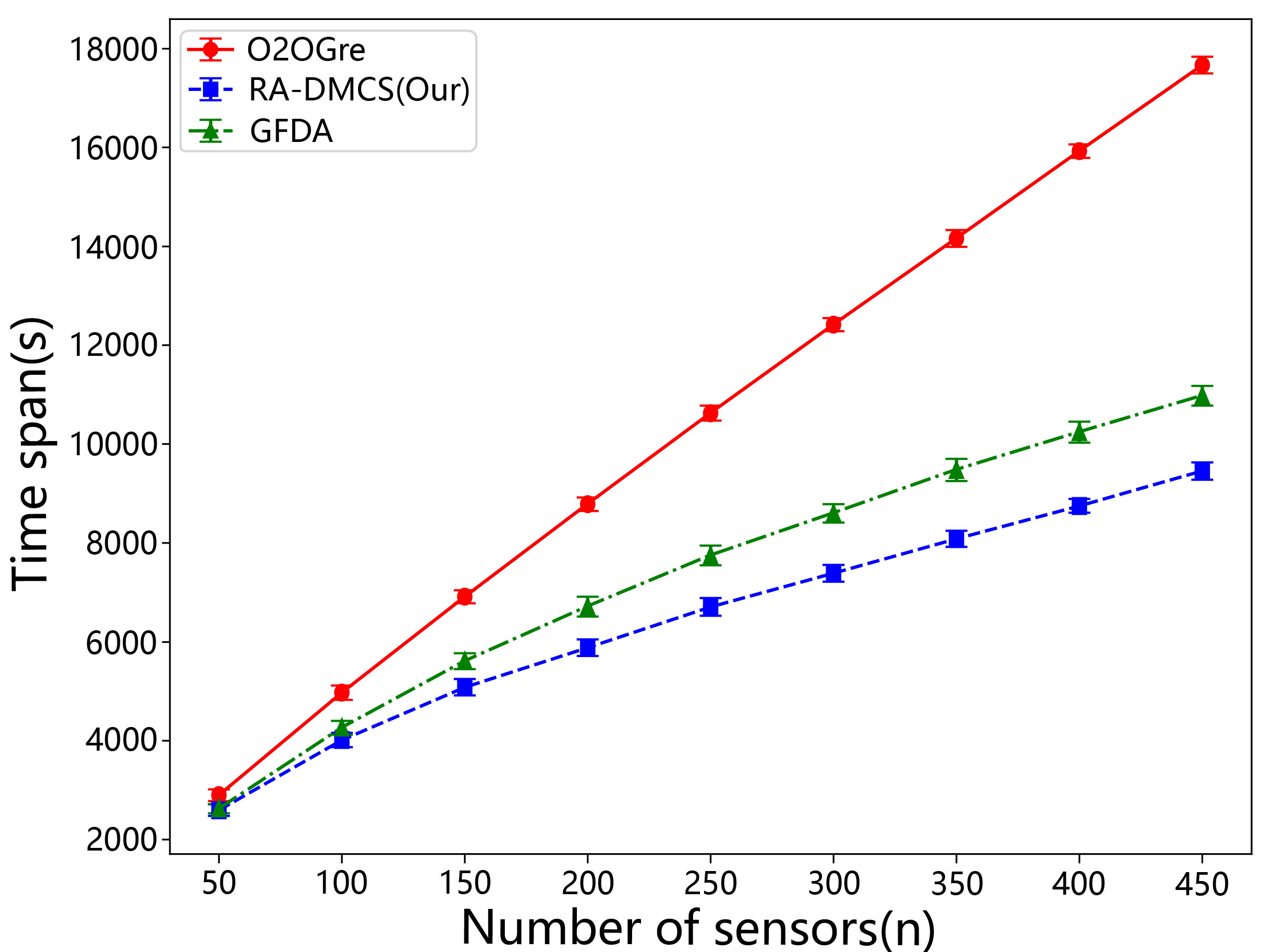}}
    \label{fig6(b)}
    \subfigure[Tour distance]{\includegraphics[width=0.32\textwidth]{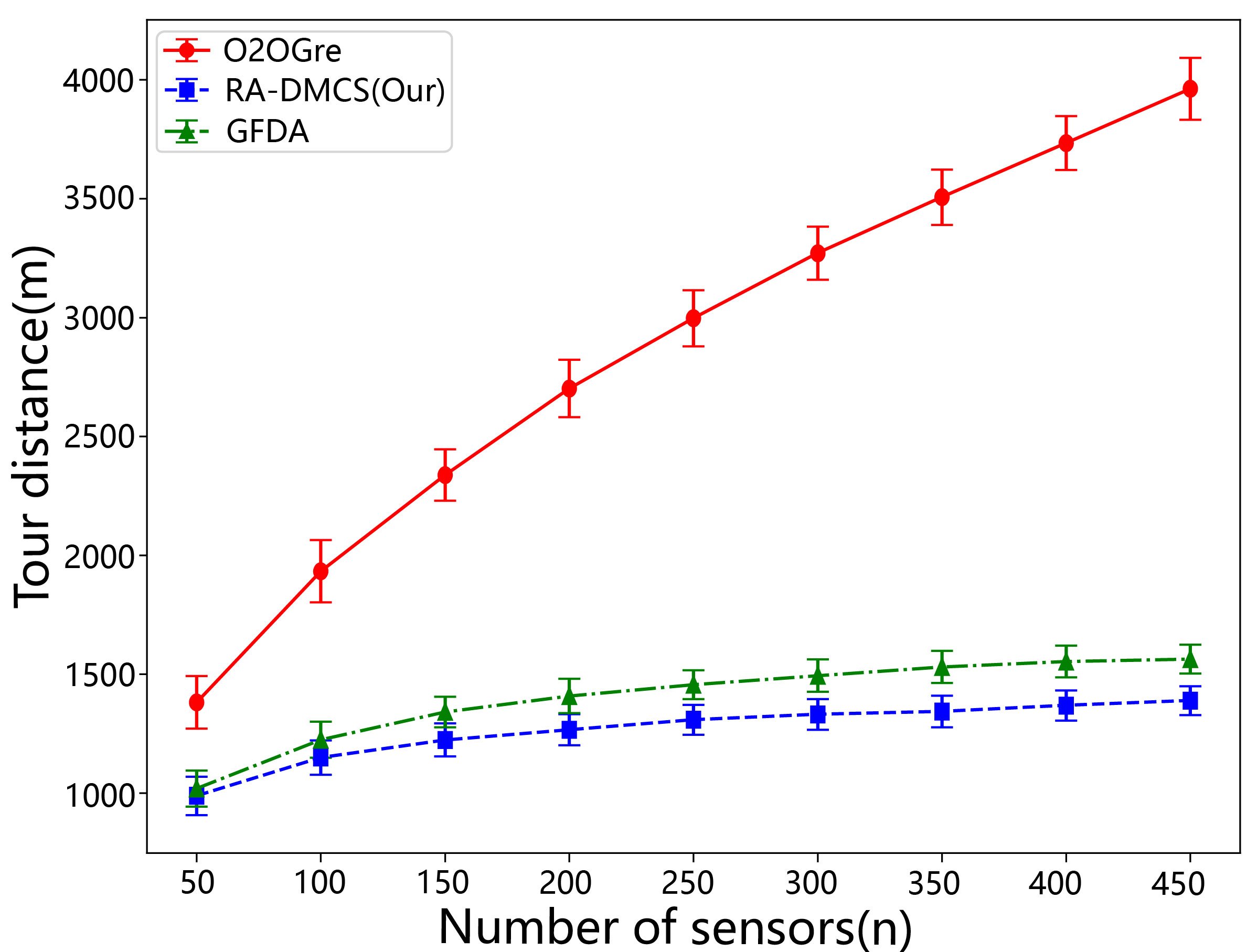}}
    \label{fig6(c)}
\caption{Effects of the number of nodes}
\label{fig6}
\end{figure*}

The results in Fig.~\ref{fig6}(a) shows that, as the number of nodes increases, the energy loss of the three algorithms also increases, yet RA-DMCS consistently maintains a lower energy consumption level than the other two algorithms. As the number of nodes increases, energy consumption of GFDA and RA-DMCS both tend to stabilize. This comes from two fold effects of increasing node number. On the one hand, compared to the large increase in node number, the number of clusters does not increase as much, which in turn makes the movement energy consumption does not increase much. On the other hand, as the number of nodes increases, each cluster will contain more nodes, and more energy will be harvested by them from the DMC's energy transmission signal, thus leading reduced energy loss. As an integrated effect, the energy consumption of the algorithms become more stable.

As shown in Fig.~\ref{fig6}(b), as the number of nodes increases, the time span of the solutions returned by the algorithms all increase. Compared with O2OGre, time spans of GFDA and RA-DMCS are much lower, as they both exploit M2M charging mode. Compared to O2OGre, Exploiting M2M not only reduces the number of charging positions, which leads to reduced move time and movement energy consumption, but also allows multiple nodes to be charged simultaneously, significantly shortening charging time. Compared with GFDA, our RA-DMCS employs an optimal direction selection algorithm and a better position selection method, making it outperforms GFDA. 

Fig.~\ref{fig6}(c) comparatively show the tour lengths of solutions obtained by the three algorithms. O2OGre utilizes the less efficient O2O charging mode, causing the DMC to visit all nodes to complete the entire charge task set, thus leading to a much longer tour length to complete the entire scheduling task. RA-DMCS also outperforms GFDA in term of this metric. Thus, RA-DMCS demonstrates the best performance among them in all the three performance metrics.

To better understand the performance of the algorithms, we further inspect the performance results in some detailed metrics, as shown in Fig.~\ref{fig7}. 

Fig.~\ref{fig7}(a) provides the results of energy consumption related metrics including total energy loss, charging energy loss, movement energy consumption, with an additional metric of tour distance for facilitating the inspection. As the travel distance metric scales differently from the other metrics, its y-axis is separately placed on the right side of the figure chart. The results show that RA-DMCS outperforms the others in the terms of these energy related metrics.

Fig.~\ref{fig7}(b) shows the results for time-related detail metrics including time span, charging time, moving time and algorithm runtime. RA-DMCS consumes the least time in terms of charging and moving times. RA-DMCS significantly outperforms GFDA and O2OGre. The results validate the superiority of RA-DMCS over others.

\begin{figure*}[!htb]
\centering
\subfigure[Energy loss]{\includegraphics[width=0.46\textwidth, height=0.3\textwidth]{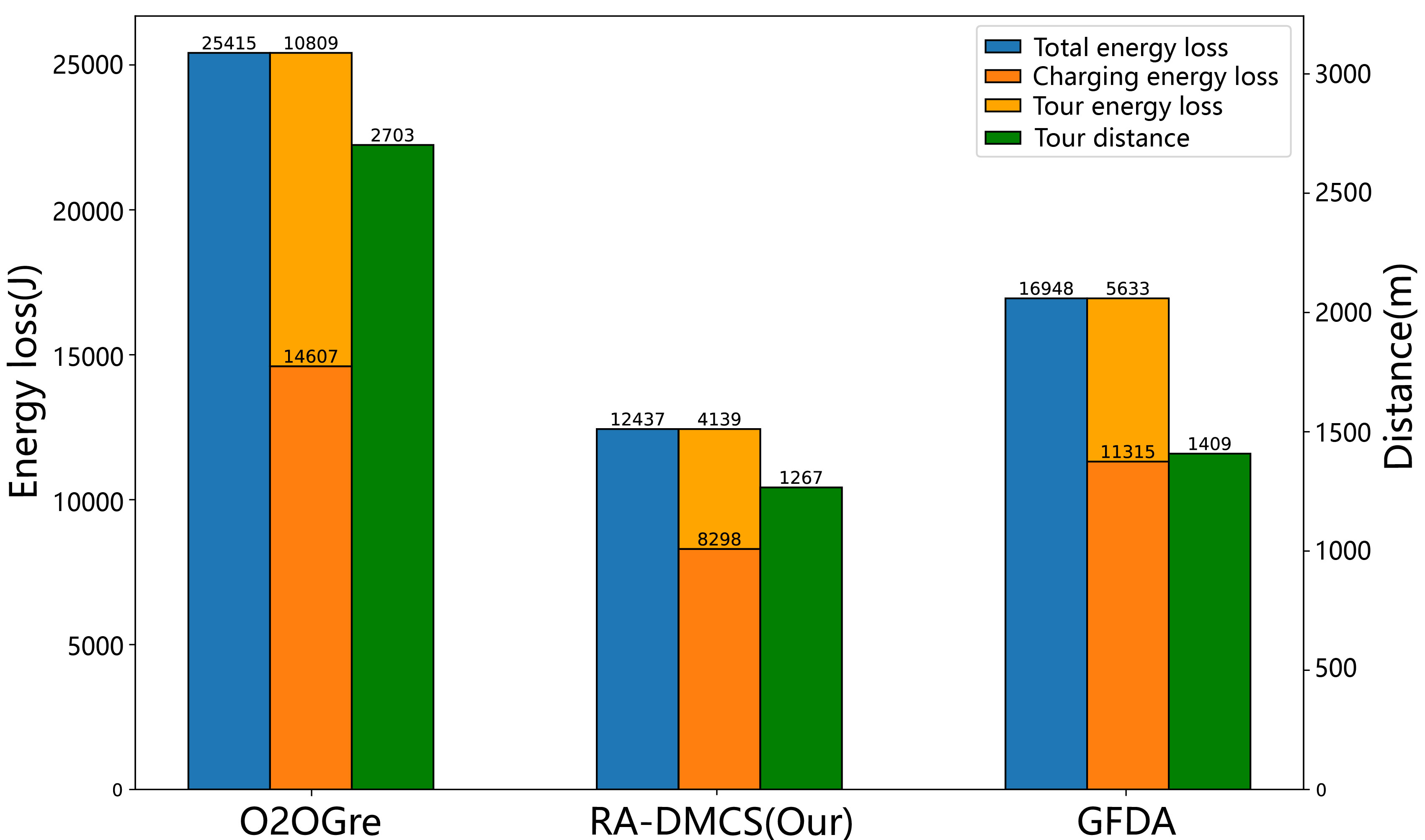}}
\subfigure[Time]{\includegraphics[width=0.46\textwidth,height=0.3\textwidth]{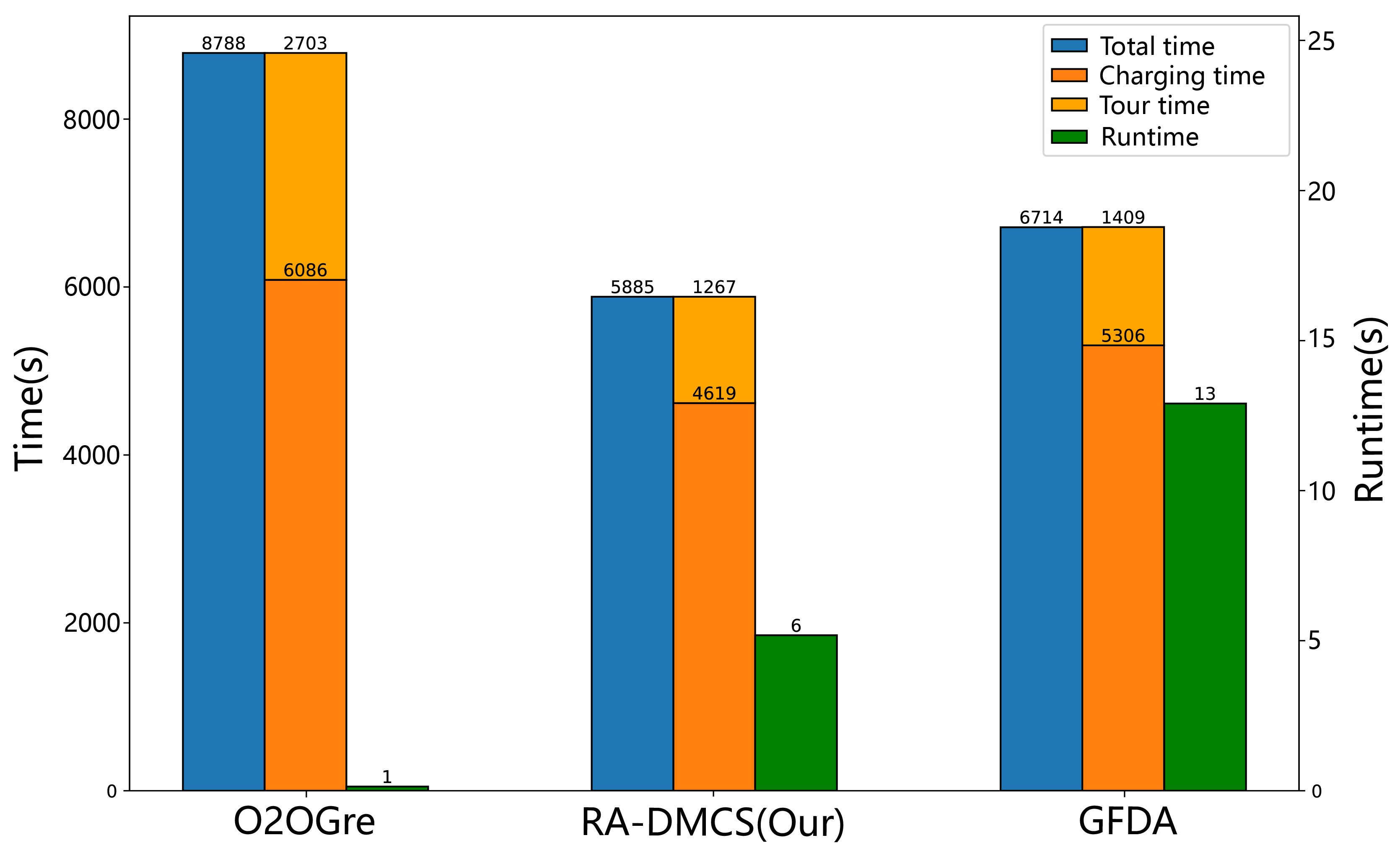}} 
\caption{Results of detail metrics of the algorithms}
\label{fig7}
\end{figure*}

\subsection{Comparison of Different Approaches in solving Charging tour decision problem P4}
\label{sec_atsp_vs_transorm_tsp}

Charging tour decision problem P4 in Eq.~\eqref{eq_ADMCCS_step4_ATSP} for routing-asymmetric WRSNs in the fourth step in Sec.~\ref{sec_ADMCCS_step4_ATSP} is a main challenge that distinguish our work from the literature. To concentrate on the performance of different approaches in solving the the charging tour decision problem, we select other five representative algorithms and test there performance in solving ATSP problem instances. 

For representing heuristic algorithms, we select an ant colony based algorithm and an greedy algorithm, which are designated as ATSP\_Ant and ATSP\_Greedy, respectively. For the approach of transforming ATSP to TSP and then solve it using TSP algorithms, here we reuse LKH, ant colony, and greedy algorithm to solve the transformed TSP problem. We denote the corresponding entire method as TSP\_LKH, TSP\_Ant, and TSP\_Greedy, respectively. For the original LKH algorithm that directly applies to ATSP, we denote it as ATSP\_LKH for discrimination. 

Simulation experiments similar to those in the previous section are conducted. We just provide the results for inspecting the effects of node number. The results of these six algorithms are shown in Fig.~\ref{fig8}.

\begin{figure*}[!htb]
\centering
\subfigure[Movement energy consumption]{\includegraphics[width=0.32\textwidth]{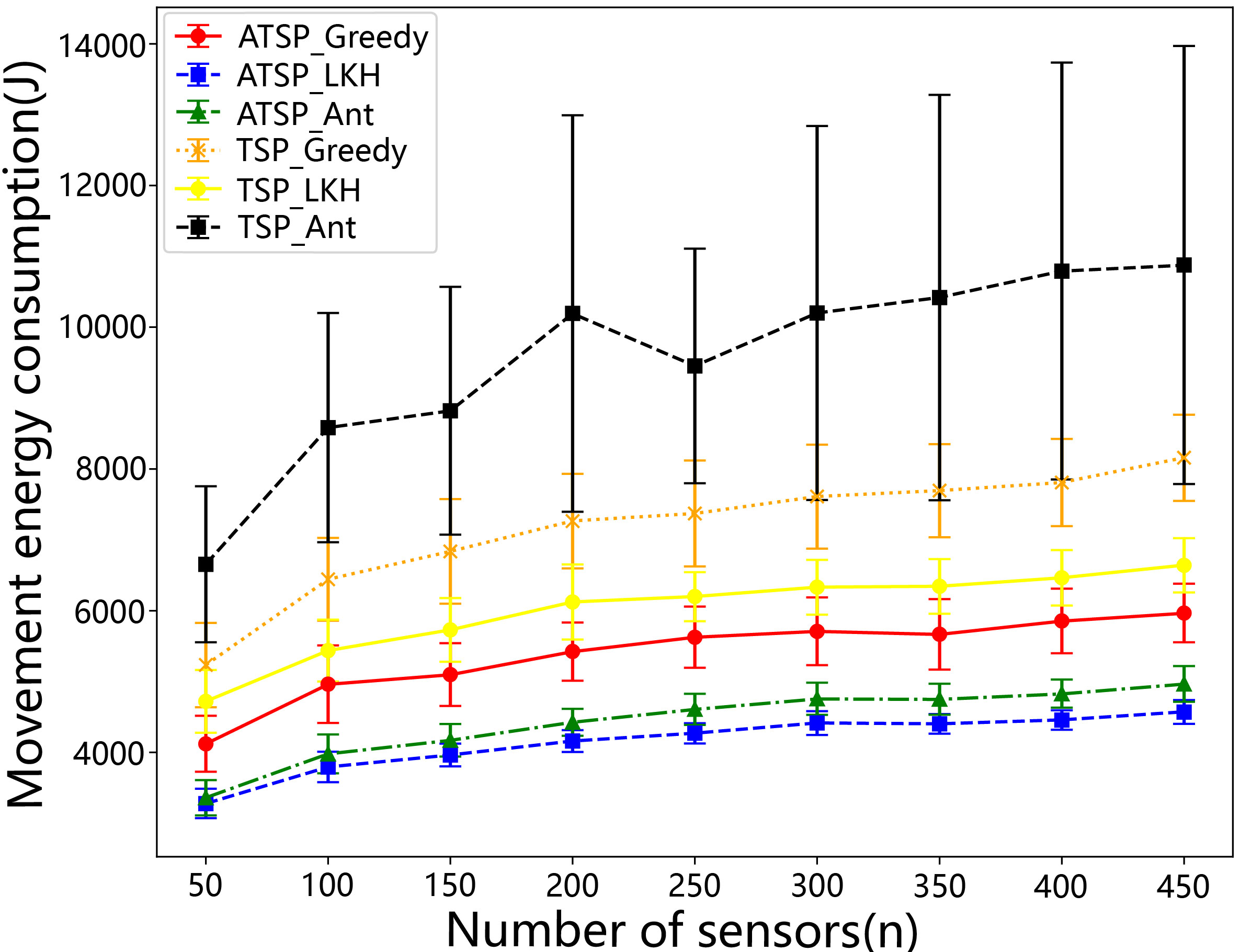}}
\label{fig8(a)}
\subfigure[Moving time]{\includegraphics[width=0.33\textwidth]{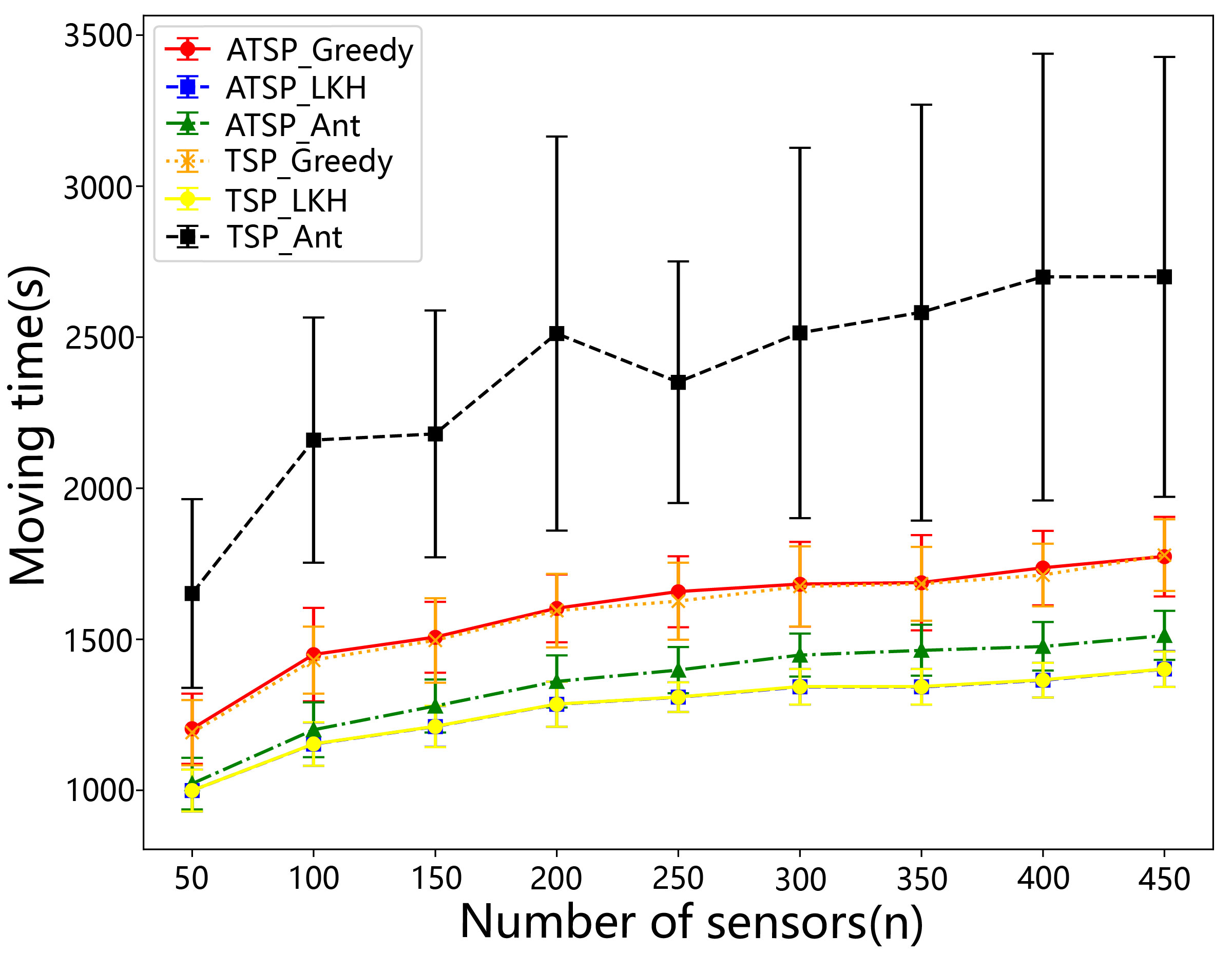}}
\label{fig8(b)}
\subfigure[Runtime]{\includegraphics[width=0.328\textwidth]{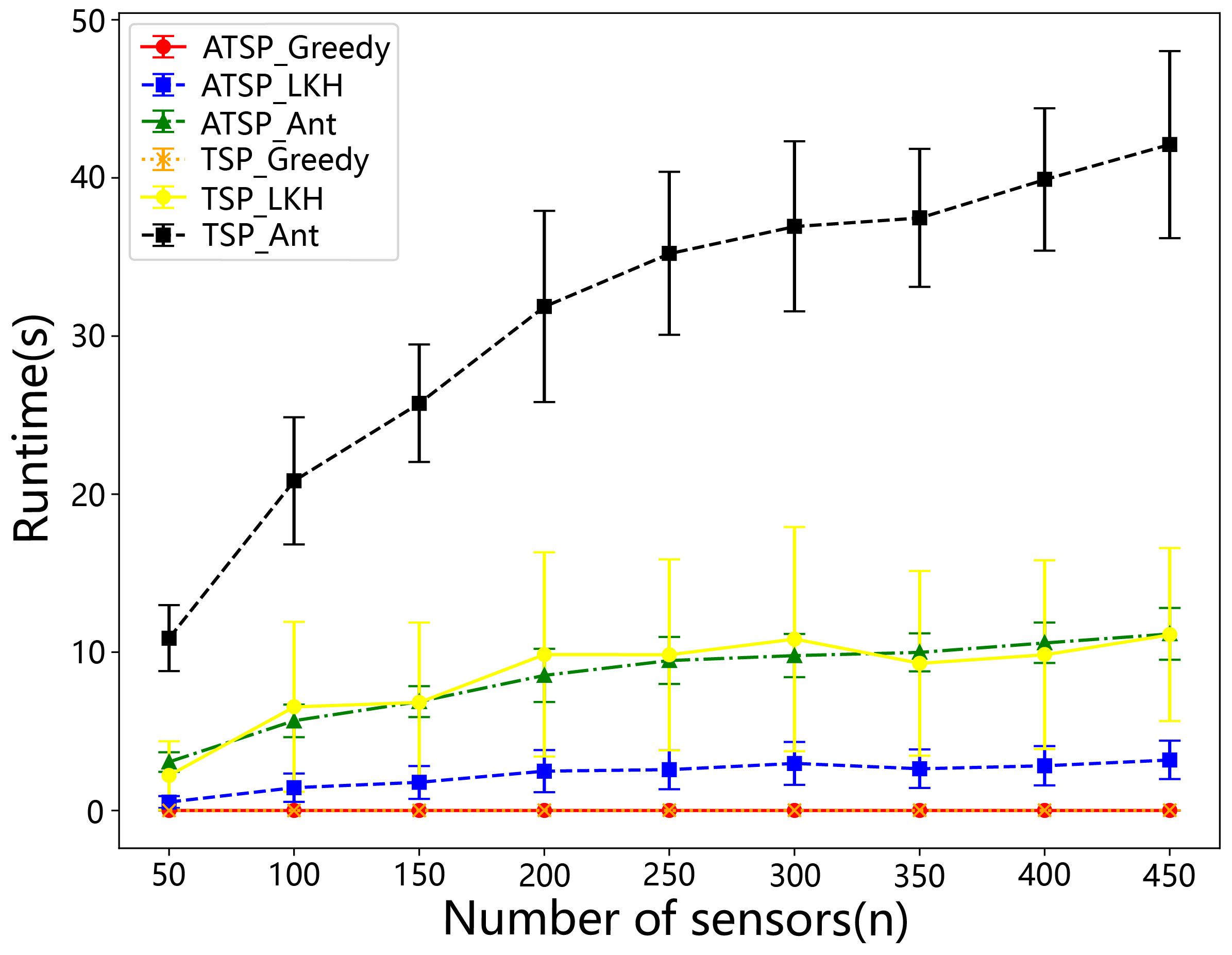}}
\label{fig8(c)}
\caption{Comparison of different approaches to solve ATSP}
\label{fig8}
\end{figure*}

Fig. \ref{fig8} shows that ATSP\_LKH demonstrates optimal overall performance in term of movement energy consumption. Although solving ATSP by converting it to an TSP problem is feasible, performance of algorithms using this approach are consistently inferior to the comparators of directly solving ATSP. Fig.~\ref{fig8} also shows that, while ATSP\_Ant performs well in solving ATSP, its comparator TCP\_Ant exhibits much poorer performance. This is because that, converting ATSP to TSP doubles the points to be visited, making the ant colony algorithm more likely to fall into local optima.

The results in Fig. \ref{fig8}(b) show that, although ATSP\_LKH performs much better than TSP\_LKH in term of movement energy consumption, as they differ only in the form of input data, they have similar performance in term of moving time.

Fig.~\ref{fig8}(c) reveals that ATSP\_Greedy has the shortest runtime, but at the expense of slight performance degradation. ATSP\_LKH ranks second in this metric. Overall, ATSP\_LKH is the most preferable one.

\section{Conclusion}
\label{sec:conclusion}

In this paper, we investigates the DMC scheduling problem for node recharging in routing-asymmetric WRSNs, termed as ADMCCS, aiming to determine the minimal energy loss charging strategy. We first prove that ADMCCS is an NP-hard problem, and subsequently decompose it into four key components: charging position generation, determination of charging directions, optimal energy transmission time length and asymmetric path planning. Regarding charging position generation, we prove its NP-hard problem and propose an adaptive KCPG algorithm to minimize the number of charging positions while covering all nodes. For charging direction determination, we employ the cMFRDS algorithm to select a minimal functional representative direction set for each position. In third part, we formulate a linear programming optimization problem to minimize the total transmission time, solved using Cplex, to determine the optimal charging times for each direction. In the asymmetric path planning phase, we utilize the LKH algorithm to find paths with minimal energy consumption based on the charging position set. Finally, we present the RA-DMCS algorithm as a comprehensive solution to the ADMCCS problem, integrating the methodologies from the four aforementioned parts. Currently, our approach still has room for improvement, and we plan to further expand our work based on ADMCCS. We did not consider issues include the presence of obstacles in the network, energy allocation among nodes, dynamic considerations of changes in node distribution, and the collaborative operation of omnidirectional and directional charging vehicles. We aim to address these problems in future work.

\bibliographystyle{IEEEtran}
\bibliography{AsymTSP}

\begin{thebibliography}{10}
\providecommand{\url}[1]{#1}
\csname url@samestyle\endcsname
\providecommand{\newblock}{\relax}
\providecommand{\bibinfo}[2]{#2}
\providecommand{\BIBentrySTDinterwordspacing}{\spaceskip=0pt\relax}
\providecommand{\BIBentryALTinterwordstretchfactor}{4}
\providecommand{\BIBentryALTinterwordspacing}{\spaceskip=\fontdimen2\font plus
\BIBentryALTinterwordstretchfactor\fontdimen3\font minus
  \fontdimen4\font\relax}
\providecommand{\BIBforeignlanguage}[2]{{%
\expandafter\ifx\csname l@#1\endcsname\relax
\typeout{** WARNING: IEEEtran.bst: No hyphenation pattern has been}%
\typeout{** loaded for the language `#1'. Using the pattern for}%
\typeout{** the default language instead.}%
\else
\language=\csname l@#1\endcsname
\fi
#2}}
\providecommand{\BIBdecl}{\relax}
\BIBdecl

\bibitem{wang_secure_2022}
\BIBentryALTinterwordspacing
Y.~Wang, H.~T. Luan, Z.~Su, N.~Zhang, and A.~Benslimane, ``A secure and
  efficient wireless charging scheme for electric vehicles in vehicular energy
  networks,'' \emph{IEEE Transactions on Vehicular Technology}, vol.~71, no.~2,
  pp. 1491--1508, Feb. 2022. [Online]. Available:
  \url{https://ieeexplore.ieee.org/document/9632356/}
\BIBentrySTDinterwordspacing

\bibitem{ma_path_2019}
\BIBentryALTinterwordspacing
Y.-N. Ma, Y.-J. Gong, C.-F. Xiao, Y.~Gao, and J.~Zhang, ``Path planning for
  autonomous underwater vehicles: An ant colony algorithm incorporating alarm
  pheromone,'' \emph{IEEE Transactions on Vehicular Technology}, vol.~68,
  no.~1, pp. 141--154, jan 2019. [Online]. Available:
  \url{https://ieeexplore.ieee.org/document/8540402/}
\BIBentrySTDinterwordspacing

\bibitem{meng_learning-driven_2021}
\BIBentryALTinterwordspacing
Z.~Meng, H.~Xu, M.~Chen, Y.~Xu, Y.~Zhao, and C.~Qiao, ``Learning-driven
  decentralized machine learning in resource-constrained wireless edge
  computing,'' in \emph{{IEEE} {INFOCOM} 2021 - {IEEE} {Conference} on
  {Computer} {Communications}}.\hskip 1em plus 0.5em minus 0.4em\relax
  Vancouver, BC, Canada: IEEE, may 2021, pp. 1--10. [Online]. Available:
  \url{https://ieeexplore.ieee.org/document/9488817/}
\BIBentrySTDinterwordspacing

\bibitem{fu_optimal_2016}
\BIBentryALTinterwordspacing
L.~Fu, P.~Cheng, Y.~Gu, J.~Chen, and T.~He, ``Optimal charging in wireless
  rechargeable sensor networks,'' \emph{IEEE Transactions on Vehicular
  Technology}, vol.~65, no.~1, pp. 278--291, jan 2016. [Online]. Available:
  \url{http://ieeexplore.ieee.org/document/7006710/}
\BIBentrySTDinterwordspacing

\bibitem{fan_survey_2018}
\BIBentryALTinterwordspacing
Z.~Fan, Z.~Jie, and Q.~YuJie, ``A survey on wireless power transfer based
  charging scheduling schemes in wireless rechargeable sensor networks,'' in
  \emph{2018 {IEEE} 4th {International} {Conference} on {Control} {Science} and
  {Systems} {Engineering} ({ICCSSE})}.\hskip 1em plus 0.5em minus 0.4em\relax
  Wuhan, China: IEEE, aug 2018, pp. 194--198. [Online]. Available:
  \url{https://ieeexplore.ieee.org/document/8724809/}
\BIBentrySTDinterwordspacing

\bibitem{liguang_xie_wireless_2013}
\BIBentryALTinterwordspacing
L.~Xie, Y.~Shi, Y.~T. Hou, and A.~Lou, ``Wireless power transfer and
  applications to sensor networks,'' \emph{IEEE Wireless Communications},
  vol.~20, no.~4, pp. 140--145, aug 2013. [Online]. Available:
  \url{http://ieeexplore.ieee.org/document/6590061/}
\BIBentrySTDinterwordspacing

\bibitem{Gao2023iotj}
\BIBentryALTinterwordspacing
Z.~Gao, Y.~Chen, L.~Fan, H.~Wang, H.~C.-H. Scott, and H.-C. Wu, ``Joint energy
  loss and time span minimization for energy redistribution assisted charging
  of {WRSNs} with a mobile charger,'' \emph{IEEE Transactions on Green
  Communications and Networking}, vol.~10, no.~5, pp. 4636--4651, 2023.
  [Online]. Available: \url{https://doi.org/10.1109/JIOT.2022.3219061}
\BIBentrySTDinterwordspacing

\bibitem{dorigo_ant_2006}
\BIBentryALTinterwordspacing
M.~Dorigo, M.~Birattari, and T.~Stutzle, ``Ant colony optimization,''
  \emph{IEEE Computational Intelligence Magazine}, vol.~1, no.~4, pp. 28--39,
  nov 2006. [Online]. Available:
  \url{http://ieeexplore.ieee.org/document/4129846/}
\BIBentrySTDinterwordspacing

\bibitem{george_review_2020}
\BIBentryALTinterwordspacing
R.~George and T.~A.~J. Mary, ``\BIBforeignlanguage{en}{Review on directional
  antenna for wireless sensor network applications},''
  \emph{\BIBforeignlanguage{en}{IET Communications}}, vol.~14, no.~5, pp.
  715--722, mar 2020. [Online]. Available:
  \url{https://onlinelibrary.wiley.com/doi/10.1049/iet-com.2019.0859}
\BIBentrySTDinterwordspacing

\bibitem{Lin2018jss}
\BIBentryALTinterwordspacing
C.~Lin, K.~Wang, Z.~Chu, K.~Wan, D.~Jing, and G.~Wu, ``Hybrid charging
  scheduling schemes for three-dimensional underwater wireless rechargeable
  sensor networks,'' \emph{The Journal of Systems and Software}, vol. 146, pp.
  42--58, dec 2018. [Online]. Available:
  \url{https://doi.org/10.1016/j.jss.2018.09.002}
\BIBentrySTDinterwordspacing

\bibitem{Gao2020tgcn}
\BIBentryALTinterwordspacing
Z.~Gao, D.~Chen, and H.-C. Wu, ``Graph coloring inspired approximate algorithm
  for wireless energy redistribution in {WSNs},'' \emph{IEEE Transactions on
  Green Communications and Networking}, vol.~4, no.~1, pp. 42--58, 2020.
  [Online]. Available: \url{https://doi.org/10.1109/TGCN.2019.2947172}
\BIBentrySTDinterwordspacing

\bibitem{Gao2023adhoc}
\BIBentryALTinterwordspacing
Y.~Chen, H.~Wang, D.~Chen, Y.~Jiang, Z.~Gao, and J.~Cao, ``Energy
  redistribution assisted charging of wrsns with multiple mobile chargers
  having multiple base stations,'' \emph{IEEE Transactions on Green
  Communications and Networking}, no. 148, pp. 1--15, 2023. [Online].
  Available: \url{https://doi.org/10.1016/j.adhoc.2023.103213}
\BIBentrySTDinterwordspacing

\bibitem{liu_objective-variable_2020}
\BIBentryALTinterwordspacing
X.~Liu, P.~Lin, T.~Liu, T.~Wang, A.~Liu, and W.~Xu, ``Objective-variable tour
  planning for mobile data collection in partitioned sensor networks,''
  \emph{IEEE Transactions on Mobile Computing}, pp. 1--1, 2020. [Online].
  Available: \url{https://ieeexplore.ieee.org/document/9119834/}
\BIBentrySTDinterwordspacing

\bibitem{gao_scheduling_2023}
\BIBentryALTinterwordspacing
Z.~Gao, C.~Liu, and Y.~Chen, ``\BIBforeignlanguage{en}{Scheduling of
  {ERD}-{Assisted} charging of a {WRSN} using a directional mobile charger},''
  \emph{\BIBforeignlanguage{en}{IEEE Transactions on Mobile Computing}},
  vol.~23, no.~6, pp. 6681--6696, 2024. [Online]. Available:
  \url{https://ieeexplore.ieee.org/document/10285040/}
\BIBentrySTDinterwordspacing

\bibitem{wu_collaborated_2019}
\BIBentryALTinterwordspacing
T.~Wu, P.~Yang, H.~Dai, W.~Xu, and M.~Xu, ``Collaborated tasks-driven mobile
  charging and scheduling: A near optimal result,'' in \emph{{IEEE} {INFOCOM}
  2019 - {IEEE} {Conference} on {Computer} {Communications}}.\hskip 1em plus
  0.5em minus 0.4em\relax Paris, France: IEEE, apr 2019, pp. 1810--1818.
  [Online]. Available: \url{https://ieeexplore.ieee.org/document/8737509/}
\BIBentrySTDinterwordspacing

\bibitem{gao_energy_2019}
\BIBentryALTinterwordspacing
Z.~Gao, D.~Chen, and H.-C. Wu, ``Energy loss minimization for wireless power
  transfer based energy redistribution in {WSNs},'' \emph{IEEE Transactions on
  Vehicular Technology}, vol.~68, no.~12, pp. 12\,271--12\,285, dec 2019.
  [Online]. Available: \url{https://ieeexplore.ieee.org/document/8864040/}
\BIBentrySTDinterwordspacing

\bibitem{lkh_2019}
\BIBentryALTinterwordspacing
Éric D.~Taillard and K.~Helsgaun, ``{POPMUSIC} for the travelling salesman
  problem,'' \emph{European Journal of Operational Research}, vol. 272, no.~2,
  pp. 420--429, jan 2019. [Online]. Available:
  \url{https://doi.org/10.1016/j.ejor.2018.06.039}
\BIBentrySTDinterwordspacing

\bibitem{dai_optimizing_2017}
\BIBentryALTinterwordspacing
H.~Dai, X.~Wang, A.~X. Liu, H.~Ma, and G.~Chen, ``Optimizing wireless charger
  placement for directional charging,'' in \emph{{IEEE} {INFOCOM} 2017 - {IEEE}
  {Conference} on {Computer} {Communications}}.\hskip 1em plus 0.5em minus
  0.4em\relax Atlanta, GA, USA: IEEE, May 2017, pp. 1--9. [Online]. Available:
  \url{http://ieeexplore.ieee.org/document/8057017/}
\BIBentrySTDinterwordspacing

\bibitem{wang_robust_2019}
\BIBentryALTinterwordspacing
X.~Wang, H.~Dai, H.~Huang, Y.~Liu, G.~Chen, and W.~Dou, ``Robust scheduling for
  wireless charger networks,'' in \emph{{IEEE} {INFOCOM} 2019 - {IEEE}
  {Conference} on {Computer} {Communications}}.\hskip 1em plus 0.5em minus
  0.4em\relax Paris, France: IEEE, apr 2019, pp. 2323--2331. [Online].
  Available: \url{https://ieeexplore.ieee.org/document/8737628/}
\BIBentrySTDinterwordspacing

\bibitem{jiang_efficient_2016}
J.-R. Jiang and J.-H. Liao, ``\BIBforeignlanguage{en}{Efficient {Wireless}
  {Charger} {Deployment} for {Wireless} {Rechargeable} {Sensor} {Networks}},''
  \emph{\BIBforeignlanguage{en}{Energies}}, 2016.

\bibitem{wu_charging_2019}
\BIBentryALTinterwordspacing
T.~Wu, P.~Yang, H.~Dai, W.~Xu, and M.~Xu, ``\BIBforeignlanguage{en}{Charging
  {Oriented} {Sensor} {Placement} and {Flexible} {Scheduling} in {Rechargeable}
  {WSNs}},'' in \emph{\BIBforeignlanguage{en}{{IEEE} {INFOCOM} 2019 - {IEEE}
  {Conference} on {Computer} {Communications}}}.\hskip 1em plus 0.5em minus
  0.4em\relax Paris, France: IEEE, Apr. 2019, pp. 73--81. [Online]. Available:
  \url{https://ieeexplore.ieee.org/document/8737502/}
\BIBentrySTDinterwordspacing

\bibitem{liu_effective_2020}
\BIBentryALTinterwordspacing
T.~Liu, B.~Wu, S.~Zhang, J.~Peng, and W.~Xu, ``An {Effective} {Multi}-node
  {Charging} {Scheme} for {Wireless} {Rechargeable} {Sensor} {Networks},'' in
  \emph{{IEEE} {INFOCOM} 2020 - {IEEE} {Conference} on {Computer}
  {Communications}}.\hskip 1em plus 0.5em minus 0.4em\relax Toronto, ON,
  Canada: IEEE, Jul. 2020, pp. 2026--2035. [Online]. Available:
  \url{https://ieeexplore.ieee.org/document/9155262/}
\BIBentrySTDinterwordspacing

\bibitem{lin_maximizing_2020}
\BIBentryALTinterwordspacing
C.~Lin, F.~Gao, H.~Dai, J.~Ren, L.~Wang, and G.~Wu, ``Maximizing {Charging}
  {Utility} with {Obstacles} through {Fresnel} {Diffraction} {Model},'' in
  \emph{{IEEE} {INFOCOM} 2020 - {IEEE} {Conference} on {Computer}
  {Communications}}.\hskip 1em plus 0.5em minus 0.4em\relax Toronto, ON,
  Canada: IEEE, Jul. 2020, pp. 2046--2055. [Online]. Available:
  \url{https://ieeexplore.ieee.org/document/9155274/}
\BIBentrySTDinterwordspacing

\bibitem{dai_placing_2020}
\BIBentryALTinterwordspacing
H.~Dai, C.~Wu, X.~Wang, W.~Dou, and Y.~Liu, ``Placing wireless chargers with
  limited mobility,'' in \emph{{IEEE} {INFOCOM} 2020 - {IEEE} {Conference} on
  {Computer} {Communications}}.\hskip 1em plus 0.5em minus 0.4em\relax Toronto,
  ON, Canada: IEEE, jul 2020, pp. 2056--2065. [Online]. Available:
  \url{https://ieeexplore.ieee.org/document/9155356/}
\BIBentrySTDinterwordspacing

\bibitem{tomar_efficient_2017}
\BIBentryALTinterwordspacing
A.~Tomar, R.~Anwit, and P.~K. Jana, ``An efficient scheme for on-demand energy
  replenishment in wireless rechargeable sensor networks,'' in \emph{2017
  {International} {Conference} on {Advances} in {Computing}, {Communications}
  and {Informatics} ({ICACCI})}.\hskip 1em plus 0.5em minus 0.4em\relax Udupi:
  IEEE, sep 2017, pp. 125--130. [Online]. Available:
  \url{http://ieeexplore.ieee.org/document/8125828/}
\BIBentrySTDinterwordspacing

\bibitem{priyadarshani_efficient_2021}
\BIBentryALTinterwordspacing
S.~Priyadarshani, A.~Tomar, and P.~K. Jana, ``\BIBforeignlanguage{en}{An
  efficient partial charging scheme using multiple mobile chargers in wireless
  rechargeable sensor networks},'' \emph{\BIBforeignlanguage{en}{Ad Hoc
  Networks}}, vol. 113, p. 102407, mar 2021. [Online]. Available:
  \url{https://linkinghub.elsevier.com/retrieve/pii/S1570870520307344}
\BIBentrySTDinterwordspacing

\bibitem{Jonker1983}
\BIBentryALTinterwordspacing
R.~Jonker and T.~Volgenant, ``\BIBforeignlanguage{en-US}{Transforming
  asymmetric into symmetric traveling salesman problems},''
  \emph{\BIBforeignlanguage{en-US}{Operations Research Letters}}, vol.~2,
  no.~4, pp. 161--163, Nov. 1983, publisher: North-Holland. [Online].
  Available:
  \url{http://www-sciencedirect-com-s.w.hqu.edu.cn:8118/science/article/abs/pii/0167637783900482}
\BIBentrySTDinterwordspacing

\bibitem{dai_radiation_2017}
\BIBentryALTinterwordspacing
H.~Dai, H.~Ma, and A.~X. Liu, ``\BIBforeignlanguage{en}{Radiation constrained
  scheduling of wireless charging tasks},'' in
  \emph{\BIBforeignlanguage{en}{Proceedings of the 18th {ACM} {International}
  {Symposium} on {Mobile} {Ad} {Hoc} {Networking} and {Computing}}}.\hskip 1em
  plus 0.5em minus 0.4em\relax Chennai India: ACM, jul 2017, pp. 1--10.
  [Online]. Available: \url{https://dl.acm.org/doi/10.1145/3084041.3084060}
\BIBentrySTDinterwordspacing

\bibitem{lee_energy-efficient_2021}
\BIBentryALTinterwordspacing
C.~Lee, W.~Na, G.~Jang, C.~Lee, and S.~Cho, ``Energy-efficient and
  delay-minimizing charging method with a multiple directional mobile
  charger,'' \emph{IEEE Internet of Things Journal}, vol.~8, no.~10, pp.
  8291--8303, may 2021. [Online]. Available:
  \url{https://ieeexplore.ieee.org/document/9295334/}
\BIBentrySTDinterwordspacing

\bibitem{Papadimitriou2006}
\BIBentryALTinterwordspacing
C.~H. Papadimitriou and S.~Vempala, ``On the approximability of the traveling
  salesman problem,'' \emph{Combinatorica}, vol.~26, no.~1, pp. 101--120, 2006,
  iD: Papadimitriou*2006. [Online]. Available:
  \url{https://doi.org/10.1007/s00493-006-0008-z}
\BIBentrySTDinterwordspacing

\bibitem{Fowler1981}
\BIBentryALTinterwordspacing
R.~J. Fowler, M.~S. Paterson, and S.~L. Tanimoto, ``Optimal packing and
  covering in the plane are np-complete,'' \emph{Information Processing
  Letters}, vol.~12, no.~3, pp. 133--137, 1981. [Online]. Available:
  \url{https://doi.org/10.1016/j.adhoc.2023.103213}
\BIBentrySTDinterwordspacing

\bibitem{Welzl1991lncs}
\BIBentryALTinterwordspacing
E.~Welzl, ``\BIBforeignlanguage{English}{Smallest enclosing disks (balls and
  ellipsoids)},'' vol. 555 LNCS, Graz, Austria, 1991, pp. 359 -- 370. [Online].
  Available: \url{http://dx.doi.org/10.1007/BFb0038202}
\BIBentrySTDinterwordspacing

\bibitem{lkh_web}
\BIBentryALTinterwordspacing
``Lkh version 2.0.10 (november 2022).'' [Online]. Available:
  \url{http://akira.ruc.dk/~keld/research/LKH/}
\BIBentrySTDinterwordspacing

\bibitem{macqueen_methods_1967}
J.~MacQueen, ``Some methods for classification and analysis of multivariate
  observations,'' in \emph{Proceedings of the fifth {Berkeley} symposium on
  mathematical statistics and probability}, vol.~1.\hskip 1em plus 0.5em minus
  0.4em\relax Oakland, CA, USA, 1967, pp. 281--297.

\bibitem{liang_grouping_2023}
\BIBentryALTinterwordspacing
Y.~Liang, M.~Yin, Y.~Zhang, W.~Wang, W.~Jia, and T.~Wang,
  ``\BIBforeignlanguage{en}{Grouping reduces energy cost in directionally
  rechargeable wireless vehicular and sensor networks},''
  \emph{\BIBforeignlanguage{en}{IEEE Transactions on Vehicular Technology}},
  vol.~72, no.~8, pp. 10\,840--10\,851, aug 2023. [Online]. Available:
  \url{https://ieeexplore.ieee.org/document/10081057/}
\BIBentrySTDinterwordspacing

\end{thebibliography}

\end{document}